\newcommand{\si}{\sigma}
\renewcommand{\th}{\theta}
\newcommand{\la}{\lambda}
\newcommand{\ga}{\gamma}
\newcommand{\eps}{\varepsilon}
\renewcommand{\phi}{\varphi}
\newcommand{\scr}[1]{{\mathcal #1}}
\newcommand{\EE}{\mathbb{E}}
\newcommand{\PP}{\mathbb{P}}
\newcommand{\RR}{\mathbb{R}}
\renewcommand{\tilde}{\widetilde}
\providecommand{\I}{\mathrm{I}}
\providecommand{\diag}[0]{{\operatorname{diag}}}
\providecommand{\trace}{{\operatorname{tr}}}
\newcommand{\rara}[1]{\renewcommand{\arraystretch}{#1}}
\newcommand{\vt}[1]{\breve{#1}}
\mathchardef\given="626A
\providecommand{\e}{\mathrm{e}}
\newcommand{\bem}{\begin{bmatrix}}
\newcommand{\enm}{\end{bmatrix}}
\newcommand{\bb}[1]{\mathbb{#1}}
\newtheorem{thm}{Theorem}[section]
\newtheorem{lemma}[thm]{Lemma}
\newtheorem{cll}[thm]{Corollary}
\theoremstyle{definition}
\newtheorem{algorithm}{Algorithm}
\newtheorem{prop}[thm]{Proposition}
\newtheorem{rem}[thm]{Remark}
\newtheorem{ex}[thm]{Example}
\newtheorem{defn}[thm]{Definition}
\newcommand{\dd}{{\,\mathrm d}}
\newcommand{\expec}[1]{\ensuremath{{\rm E}\mspace{-1mu}\left[#1\right]}}
\newcommand{\T}{{\prime}}
\begin{document}

\begin{frontmatter}

\title{Bayesian estimation of  discretely observed  multi-dimensional diffusion processes using guided proposals}

\runtitle{Bayesian estimation for diffusions using guided proposals}

\begin{aug}
\author{Frank van der Meulen\ead[label=e1]{f.h.vandermeulen@tudelft.nl} and
Moritz Schauer, \ead[label=e2]{m.r.schauer@math.leidenuniv.nl}}

\runauthor{Van der Meulen and Schauer}

\affiliation{Delft University of Technology and  Leiden University }

\address{Delft Institute of Applied Mathematics (DIAM) \\
Delft University of Technology\\
Mekelweg 4\\
2628 CD Delft\\
The Netherlands\\
\printead{e1}\\[1em]
Mathematical Institute\\
Leiden University\\
P.O. Box 9512\\
2300 RA Leiden\\
The Netherlands\\
\printead{e2}}

\end{aug}

\begin{abstract}

Estimation of parameters of a diffusion based on discrete time observations poses a difficult problem due to the lack of a closed form expression for the likelihood. From a Bayesian computational perspective it can be casted as a missing data problem where the diffusion bridges in between discrete-time observations are missing. The computational problem can then be dealt with using a Markov-chain Monte-Carlo method known as  data-augmentation.
If unknown parameters appear in the diffusion coefficient, direct implementation of data-augmentation results in a Markov chain that is reducible. Furthermore, data-augmentation requires  efficient sampling of diffusion bridges, which can be difficult, especially in the multidimensional case.

We present a general framework to deal with with these problems that does not rely on discretisation.  The construction generalises previous approaches and sheds light on the assumptions necessary to make these approaches work. We define a random-walk type Metropolis-Hastings sampler for updating diffusion bridges. Our methods are illustrated  using guided proposals for sampling diffusion bridges. These are Markov processes obtained by adding a guiding term to the drift of the diffusion. We give  general guidelines on the construction of these proposals and  introduce a time change and scaling of the guided proposal  that reduces discretisation error. Numerical  examples demonstrate the performance of our methods.

\medskip

\noindent
 \emph{ {Keywords:} Multidimensional diffusion bridge;  data augmentation; discretisation of path integral; linear process; innovation process; non-centred parametrisation; FitzHugh-Nagumo model.}
 \end{abstract}

\begin{keyword}[class=MSC]
\kwd[Primary ]{62M05, 60J60}
\kwd[; secondary ]{ 62F15}
\kwd{65C05}
\end{keyword}
%62M05 Markov processes: estimation
%60J60 Diﬀusion processes [See also 58J65]
%
%62F15 Bayesian inference
%65C05 Monte Carlo methods

\end{frontmatter}

\numberwithin{equation}{section}

%%%%%%%%%%%%%%%%%%%%%%5

\section{Introduction}
 %%%%%%%%%%%%%
In this article we discuss a novel approach for estimating an unknown parameter $\th \in \Theta$  of the drift and the diffusion coefficient of a diffusion process
\begin{equation}\label{sde}
 \dd X_t = b_\th(t,X_t) \dd t + \si_\th(t,X_t) \dd W_t, \qquad X_0=u  \end{equation}
which is observed discretely in time. Here $b_\th\colon \RR \times \RR^{d}$ denotes the drift function, $a_\th = \si_\th\si_\th'$ is the diffusion function, where $\si_\th\colon \RR \times \RR^d \to \RR^{d\times d'}$, and $W$ is a $d'$-dimensional Wiener process. The observation times will be denoted by $t_0 = 0 < t_1<\cdots< t_n=T$ and the corresponding observations by $x_i=X_{t_i}$. 

Estimation of $\th$ in this setting has attracted much attention during the past decade. Here we restrict attention to estimation within the Bayesian paradigm. From a theoretical perspective, results on posterior consistency have been proved in \cite{vdMeulenvZanten} and \cite{GugushviliSpreij}. The associated computational problem is the object of study here. Two  review articles that include many references on this topic are \cite{vZanten} and \cite{Sorensen}.

The main difficulty in estimation for discretely observed diffusion processes is the lack of a closed form expression for  transition densities, making  the  likelihood intractable. If the diffusion path is observed continuously, then estimation becomes easier as for a fully observed diffusion path the likelihood is available in closed form (and parameters appearing in the diffusion coefficient can be determined from the quadratic variation of the process).   This naturally suggests to study the computational problem within a missing data framework, treating the unobserved path segments between two succeeding observation times as missing data. This setup dates back to at least \cite{Pedersen}, who used it to obtain simulated maximum likelihood estimates for $\th$. Within the Bayesian computational problem,   the resulting Markov-Chain-Monte-Carlo algorithm is known as data-augmentation and was introduced in this context by \cite{Eraker}, \cite{ElerianChibShephard} and \cite{RobertsStramer}.  
This algorithm is a special form of the  Gibbs sampler  which iterates the following steps:
\begin{enumerate}
\item draw missing segments, conditional on $\th$ and the observed discrete time data;
\item draw from the distribution of $\th$, conditional on the ``full data''.
\end{enumerate}
Here, by ``full data'' we mean the path formed by the drawn segments joined at the observation times. 
The algorithm can be initialised by either interpolating the discrete time data or choosing an initial value for $\th$.  We now discuss tho major challenges for the outlined algorithm together with various solutions that have been proposed in the literature. 

\medskip

\noindent {\it Challenge 1:  generating  ``good'' proposals for the missing segments.} 
The problem of simulating diffusion bridges  has received a lot of attention  over the past 15 years. Vastly different techniques have been proposed, 
 including {\it (i)} single site Gibbs updating of the missing segments locally on a discrete grid (\cite{Eraker}), {\it (ii)}  independent Metropolis-Hastings steps using as a proposal a Laplace approximation to the conditional distribution obtained by Euler approximation (\cite{ElerianChibShephard}),  {\it (iii)} forward simulated processes derived from representations of the Brownian bridge in discrete time (\cite{DurhamGallant}), {\it (iv)}  coupling arguments (\cite{Bladt} and \cite{Bladt2}), {\it (v)}  a constrained sequential Monte Carlo algorithm with a resampling scheme guided by
backward pilots (\cite{LinChenMykland}),  and {\it (vi)}  exact simulation (\cite{BeskosPapaspiliopoulosRobertsFearnhead}).

\cite{DelyonHu} extended the work of \cite{DurhamGallant} to a continuous time setup and derived an  innovative proposal process taking the drift of the target diffusion into account. In case the diffusion coefficient is constant this proposal was proposed earlier in  \cite{Clark}.
The basic idea consists of superimposing an additional term to the drift of the unconditioned diffusion to guide the process towards the endpoint. Such proposals are termed {\it guided proposals} and have the advantage that only forward simulation of an SDE is required. More precisely, the drift of the proposal that hits $v\in \RR^d$ at time $T$ equals $b^\circ(t,x) = \la b(t,x) + (v-x)/(T-t)$, where either $\la=0$ or $\la=1$ is chosen. If $\la=0$, the guiding term $(v-x)/(T-t)$ matches with the drift of the SDE for a Brownian bridge, which indeed has drift $0$. However,  this  proposal has the drawback that it is independent of the drift $b$ of the diffusion. If $\la=1$, the guiding term  depends on $b$ and consequently there is a potential mismatch between the drift and guiding term. In both cases (i.e.\ $\la=0$ and $\la=1$) there can be a substantial mismatch between the proposals and true bridges, rendering low acceptance rates in an MH-sampler.

In \cite{Schauer} a general class of proposal processes for simulating diffusion bridges was introduced. The proposals in \cite{Schauer} do take the drift of the target diffusion into account, but in a way different from \cite{DelyonHu}. As a result, these proposals can substantially reduce the mismatch of drift and guiding term, because they allow for more flexibility in choosing an appropriate guiding term to pull the process towards the endpoint in the right manner.  An example of the advantage of this approach  is given in the introduction of \cite{Schauer}. General guidelines to exploit the added flexibility are addressed in this paper. 

For implementation purposes, any proposal has to be evaluated on a finite number of grid points. As the pulling term added to the drift for guided proposals has a singularity near the endpoint, special care is needed in choosing a discretisation method. More importantly, integrals that appear in the acceptance probability of bridges potentially suffer from this  problem as well. In this paper we introduce a time change and scaling of the proposal process that deals with these problems. 

\medskip

\noindent  {\it Challenge 2:    handling unknown parameters appearing in the diffusion coefficient.}  As pointed out by \cite{RobertsStramer}, the data augmentation algorithm degenerates if $\th$ appears in the diffusion coefficient as the quadratic variation of the full data $\int_0^T a_\th(t,X_t) \dd t$  forces  the conditional distribution for the next iterate for $\th$ to be degenerate at the current value.  Hence,   iterates of $\th$ remain stuck at their initial value. 
The problem was solved in a discretised setting by both \cite{ChibPittShephard} and \cite{GolightlyWilkinsonChapter}. Rather than updating $\th$ conditional on the discretised diffusion bridge, they proposed  updating $\th$ conditional on the increments of the Brownian motion driving the discretised diffusion bridge.  This decouples the tight dependence between $\th$ and the diffusion bridge. However, as \cite{StramerBognar} point out ``While the promising GW approach can be applied to a large class of diffusions, it is not yet rigorously justified in the literature.'' Put differently, whereas the GW (=Golightly-Wilkinson) approach works in the discretised setup, it gives no guarantee that it also works in the limit where the discretisation level tends to zero. 

 In the continuous-time framework  a solution to the aforementioned problem was given in  \cite{RobertsStramer}  for one-dimensional diffusions. It was extended to reducible multivariate diffusions (diffusions that can be transformed to have unit diffusion coefficient) in \cite{BeskosPapaspiliopoulosRobertsFearnhead} and \cite{Sermaidis}. The basic idea is that  the laws of the bridge proposals can be understood as parametrised push forwards of the law of an underlying random process common to all models with different parameters $\th$. This is naturally the case for proposals defined as solutions of stochastic differential equations and the driving Brownian motion can be taken as such underlying random process.  If $X^\star$ denotes a missing segment given that the parameter equals $\th$, the main idea consists of finding a map $g$ and a process $Z$ such that $X^\star=g(\th, Z)$.  In a more general set-up, decouplings of similar forms are discussed under the keyword \emph{non-centred parameterisation} (\cite{PapasRobertsSkoeld}).  The process $Z$ will be called the ``innovation process'' (analogous to terminology used in  \cite{ChibPittShephard} and \cite{GolightlyWilkinsonChapter}).  
Whereas in case $\si_\th=\th$ the construction is rather easy, in general proving existence  of the map $g$ and process $Z$  is subtle and this forms an important topic of this paper. We postpone a detailed discussion to Sections \ref{sec:toyproblem}  and \ref{sec:proposed-algorithm}. 

 A first attempt of finding a non-centred parameterisation in continuous time in a  general setting   was undertaken in \cite{Fuchs} (in particular section 7.4).
 \cite{Fuchs} works in the setting of \cite{DelyonHu}, so it is assumed that the diffusion coefficient $\si$ is invertible and the diffusion is time-homogeneous. While the results in \cite{Fuchs} are formulated in continuous time, the derivation involves heuristic arguments via the Lebesgue densities of the finite dimensional distributions. 
 A recent work is \cite{PapaRobertsStramer}. In their approach the missing data is initially considered in continuous time using \cite{DelyonHu} bridge proposals, but the degeneracy problem is tackled only {\it after} discretisation. 
 
What is the essential structure behind those different approaches and how can the underlying transformations be handled in continuous time without resorting to discretisation? Are these techniques tied to certain proposals, for example the proposal processes in \cite{DelyonHu},
 or are they valid for other proposal processes as well? And can conditions such as invertibility of $\sigma$ be relaxed and is it essential that the diffusion is time-homogeneous? Part of this paper consists of answering these questions in a rigorous way. As a result of this, in our setting it is evident how to replace the independence sampler for $Z$ (which updates the diffusion bridges) by  a random-walk type update on the process $Z$ in a straightforward way.

\subsection{Contribution} In this article we present a general framework for Bayesian estimation of discretely observed diffusion processes that  satisfactory deals with both aforementioned challenges. Our approach reveals the conditions necessary for obtaining an irreducible Markov chain that samples from the posterior  (after burnin). We show that the algorithm does not suffer from the degeneracy problem in case unknown parameters appear in the diffusion coefficient, not even in the continuous time setup.  The procedure can be seen as extension and unification of previous approaches within a {\it continuous time} framework.
For example the results of the rather complicated heuristics in Section 7.4 of \cite{Fuchs} appear as a special case of our work.  Specific features of our approach include: 
\begin{itemize}
\item We use in each data augmentation step ``adapted'' bridge proposals which take both the drift {\it and the value of $\th$ at that particular iteration} into account. Hence, at each iteration, the pulling term depends on $\th$, a feature which is unavailable using proposals as in \cite{DelyonHu}. Especially in the multivariate case, the additional freedom in devising good proposals is crucial for obtaining a feasible MCMC procedure. The possibility to exploit special features of the drift function to achieve high acceptance rates makes this approach interesting for practitioners. This is illustrated with a practical example in Section \ref{sec:fitz}. 

\item We provide specialised algorithms in case the drift is of the form $b_\th(\cdot)=\sum_{i=1}^N \th_i \phi_i(\cdot)$ for known functions $\phi_1,\ldots, \phi_N$ (Cf.\ algorithms \ref{alg2} and \ref{alg3}).

\item The innovation process is defined using the proposal process. As a result, in our algorithm (Cf.\ algorithm \ref{alg1}),   the innovations actually never need to be computed. This implies that our method can also cope with the case where $\si$ is not a square matrix (which is not the case for example in \cite{Fuchs}).  

\item We illustrate our work using  linear guided proposals as introduced in \cite{Schauer} and the proposals introduced in \cite{DelyonHu}.  In section \ref{sec:choiceguided} we give general guidelines on the construction of these proposals. In section \ref{sec:fitz} we show that not taking into account the drift of the diffusion can lead to extremely small acceptance probabilities for bridges.

\item 
Though we derive all our results in a continuous time setup, for implementation purposes integrals in likelihood ratios and solutions to stochastic differential equations  need to be approximated on a finite grid. 
As the drift of our proposal bridges  has a singularity near its endpoint, we introduce a time change and scaling that allows for numerically accurate discretisation and evaluation of the likelihood.
\end{itemize}

The approach with linear guided proposals can be extended to the case of partially observed diffusions, where for example some components of the diffusion are unobserved. Though the problem becomes much harder, the underlying structure for constructing an algorithm is the same. For details we refer to \cite{vdm-schauer}.

\subsection{Outline}
In Section \ref{sec:toyproblem} we clarify the aforementioned difficulties in a toy example. Here, we set some general notation and  introduce some key ideas used throughout. In  section \ref{sec:proposed-algorithm}  we precisely state our algorithms and introduce the concept of a feasible proposal. 
 In Section \ref{sec:proposals} we show that both the proposals from \cite{DelyonHu} and \cite{Schauer} are feasible.   In Section \ref{sec:choiceguided} we give guidelines on constructing a guiding term  for the proposals from \cite{Schauer}. Numerical discretisation issues and  the computational complexity of the proposed algorithms are discussed in sections \ref{sec:tc} and  \ref{sec:impl} respectively. 
 Numerical examples are given in  Section \ref{sec:examples}. The appendix contains  a few postponed proofs.

\section{A toy problem}\label{sec:toyproblem} 
In this section we consider a toy example to illustrate some key ideas to solve the aforementioned problems with a simple data-augmentation algorithm. The type of reparameterisation introduced shortly is not new, and has appeared for example in \cite{RobertsStramer}.  The goal here is to introduce key ideas and point out some of its potential shortcomings in more complex problems. Furthermore, later on we will deal with more difficult cases and this toy example  allows us to sequentially build up an appropriate framework for that.
We consider the diffusion process  
\[\dd X_t = b(X_t) \dd t + \th\dd W_t, \quad X_0=u,\quad t\in[0,T], \] where $b$ is a known drift function and $\th \in \Theta$ an unknown scaling parameter. We assume $\th$ is equipped with a prior distribution $\pi_0(\th)$ and only one observation $X_T = v$ at time $T$  is available. We aim to draw from  the posterior $\pi(\th\mid X_T)$.
 The diffusion process conditioned on $X_T = v$ is a diffusion process itself. Denote by $X^\star$ the conditioned diffusion path $(X_t,\, t\in (0,T))$ (conditional on $X_T=v$). Suppose we wish to iterate a data-augmentation algorithm and the current iterate is given by $(X^\star, \th)$.

{\it Updating $X^\star$:}  
  For almost all choices of $b$, there is no direct way of simulating $X^\star$. Instead, one can first generate a proposal bridge $X^\circ$ and accept with MH-acceptance probability. As an easy tractable example we choose to take  
\begin{equation}\label{eq:Xo-example}	
	\dd X^\circ_t= \frac{v-X^\circ_t}{T-t} \dd t +  +\th \dd W_t, \quad X^\circ_0=u
\end{equation}
 where $W$ is a Brownian Motion on $[0,T]$.

Denote the laws of $X^\circ$ and $X^\star$ (viewed as Borel measures on $C([0,T], \RR^d)$) by $\bb{P}^\circ$ and $\bb{P}^\star$ respectively.  We have
\begin{equation}\label{eq:llratio}	\frac{\dd \bb{P}^\star_\th}{\dd \bb{P}^\circ_\th}(X^\circ) = \frac{\tilde{p}_\th(0,u; T,v)}{p_\th(0,u; T,v)} \Psi_\th(X^\circ), 
\end{equation}
with 
\[ \Psi_{\th}(X^\circ) = \exp\left( \th^{-2}\int_0^T b(X^\circ_s) \dd X^\circ_s -\frac12\th^{-2} \int_0^T b(X^\circ_s)^2 \dd s\right). \]
Here, $p$  denotes the transition densities of the process $X$ and $\tilde{p}(0,u;T,v)=\phi(v; u, \th^2 T)$ (the  density of the $N(u,\th^2 T)$-distribution, evaluated at $v$).  Absolute continuity is a consequence of Girsanov's theorem applied to the unconditioned processes and the abstract Bayes' formula. Now the MH-step consists of generating a proposal $X^\circ$ and accepting it with probability $1 \wedge \left(\Psi_\th(X^\circ)/ \Psi_\th(X^\star)\right)$ (the ratio of transition densities just acts as a proportionality constant here).

{\it Updating $\th$:}  
As explained in the introduction, taking the missing segment as missing data yields the Metropolis-Hastings algorithm reducible. To deal with this problem, note that by equation \eqref{eq:Xo-example}, there exists a mapping $g$ such that  $X^\circ=g(\th, W)$.
Define the process $Z$ by the relation
\begin{equation}\label{eq:defZstar} X^\star = g(\th, Z). \end{equation}

Now that $Z$ is defined, rather then drawing from the distribution of $\th$ conditional on $(X_0=u, X_T=v, X^\star)$ we will sample from the the distribution of $\th$ conditional on $(X_0=u, X_T=v, Z)$. This means that we augment the discrete time observations with $Z$ instead of $X^\star$. 
Denote the laws of $Z$ and $W$ by $\bb{Z}_\th$ and $\bb{W}$
respectively. Suppose the current iterate is $(\th, Z)$, where $Z$ can be extracted from $\th$ and $X^\star$ by means of equation \eqref{eq:defZstar}. 
The following diagram summarises the notation introduced:
\begin{equation}
\label{eq:measures}
\centering
\begin{tabular}{ l || c | c} 
Process& $Z \stackrel{g(\th, \cdot)}{\longrightarrow} X^\star$ & $W \stackrel{g(\th, \cdot)}{\longrightarrow} X^\circ$ \\
\hline 
Measure & $\mathbb{Z}_\th \qquad \mathbb{P}^\star_\th$ & $\mathbb{W} \qquad  \mathbb{P}^\circ_\th$ \\
\end{tabular}
\end{equation}

For updating $\th$ we propose a value ${\th^\circ}$ from some proposal distribution $q(\cdot \mid \th)$ and accept the proposal with probability $\min(1,A)$, where 
\begin{equation}
\label{eq:AA}
 A = \frac{\pi_0({\th^\circ})}{\pi_0(\theta)}\frac{p_{\th^\circ}(0,u; T,v)}{p_\theta(0,u; T,v)} \frac{\dd \bb{Z}_{\th^\circ}}{\dd \bb{Z}_\theta}(Z) \frac{q(\th \mid {\th^\circ})}{q({\th^\circ} \mid \theta)}. 
\end{equation}
Here, we have implicitly assumed that $\bb{Z}_{\th^\circ}$ and $\bb{Z}_\th$ are equivalent, which is indeed the case as  we have
%\begin{align*}
\[	\frac{\dd \bb{Z}_{\th^\circ}}{\dd \bb{Z}_\th}(Z) =\frac{\dd \bb{Z}_{\th^\circ}}{\dd \bb{W}}(Z) \bigg/ \frac{\dd \bb{Z}_\th}{\dd \bb{W}}(Z)%\\ & 
	= 
\frac{\dd \bb{P}^\star_{\th^\circ}}{\dd \bb{P}^\circ_{\th^\circ}} (g({\th^\circ}, Z)) \bigg/
\frac{\dd \bb{P}^\star_\theta}{\dd \bb{P}^\circ_\theta} (g(\theta, Z))
\]
%\end{align*}
and thus results from absolute continuity of $\bb{P}^\star_\th$ and $\bb{P}^\circ_\th$. 
By equation \eqref{eq:llratio}, we now get 
\[	\frac{\dd \bb{Z}_{\th^\circ}}{\dd \bb{Z}_\th}(Z)= \frac{p_\th(0,u; T,v)}{p_{\th^\circ}(0,u; T,v)} \frac{\tilde{p}_{\th^\circ}(0,u; T,v)}{\tilde{p}_\theta(0,u; T,v)}
 \frac{\Psi_{\th^\circ}(g({\th^\circ}, Z))}{\Psi_{ \theta}(g(\theta, Z))}. 
\]
Substituting this expression into equation \eqref{eq:AA} yields
\begin{equation} \label{eq:Atoy}
 A=  \frac{\pi_0({\th^\circ})}{\pi_0(\theta)} \frac{\tilde{p}_{\th^\circ}(0,u; T,v)}{\tilde{p}_\theta(0,u; T,v)}
 \frac{\Psi_{{\th^\circ}}(g({\th^\circ}, Z))}{\Psi_\th(g(\theta, Z))}  \frac{q(\th \mid {\th^\circ})}{q({\th^\circ} \mid \theta)}
\end{equation}
and all terms containing the unknown transition density cancel.  In Section \ref{sec:proposals} we will show that cancellation of $p$ from the acceptance probability holds much more generally. 

The feasibility and efficiency of this algorithm is crucially determined by  choice of the transition kernel $q$ and proposal process $X^\circ$. We focus  on an appropriate choice of $X^\circ$, though in section \ref{subsec:conj} we give guidelines on appropriate choice of $q$ if the drift possesses a specific structure with respect to $\theta$.

 %%%%%%%%%%%%%
\section{Proposed MCMC algorithms}\label{sec:proposed-algorithm}

Our starting point is that under weak assumptions the target diffusion bridge $X^\star$ from $u$ at time $t=0$ to $v$ at time $t=T$ is characterised as the solution to the SDE
\begin{equation}\label{xstar} 
   \dd X^\star_t=   b_\th^\star(t,X^\star_t) \dd t + \si_\th(t, X^\star_t) \dd W_t,\qquad X^\star_0=u,\qquad t \in [0,T),
\end{equation}
where 
\begin{equation}\label{bstar} 
b^\star_\th(t,x) =b_\th(t,x)+ a_\th(t,x) \nabla_x \log p_\th(t,x;T,v)
\end{equation}
and $a = \sigma\sigma^\T$. Here the transition density of $X$ is denoted by $p_\th$ and $p_\th(t, x; T,v)$  is the density of the process starting in $x$ at time $t$, ending in $v$ at time $T$.
%---
\subsection{Innovation process}
  Direct forward simulation of $X^\star$  is hardly ever possible, as $p$ is intractable. Instead, we propose to simulate a process $X^\circ$ with induced law that is absolutely continuous with respect to that of $X^\star$. More precisely, we assume the {\it proposal process} $X^\circ$ satisfies the SDE
\begin{equation}\label{xcirc} 
\dd X^\circ_t = b_\th^\circ(t, X^\circ_t) \dd t + \sigma_\th(t, X^\circ_t) \dd W_t, \qquad X^\circ_0=u,\qquad t \in [0,T)
\end{equation}

We now describe a general parametrisation to decouple the dependence between the latent paths of the diffusion between discrete time observations and the parameter $\th$.

%%%%%%%%
The following proposition is key to the definition of the map $g$. In its statement we refer to the canonical setup on which an exact SDE can be solved, details are in section V.10 of \cite{RogersWilliams2}. 

\begin{prop} 
\label{prop:existence-g}
Assume 
\begin{itemize}
\item the SDEs  for $X^\circ$ and $X^\star$ are pathwise exact (in the sense of definition V-9.4 of \cite{RogersWilliams2});
\item there exists a strong solution for the SDE for $X^\circ$ (in the sense of definition V.10.9 of \cite{RogersWilliams2}) jointly measurable with respect to starting point, parameter and path $W$;
\item $\bb{P}^\circ$ and $\bb{P}^\star$ are absolutely continuous.
\end{itemize} 
Then there exists a map $g$ and a Wiener process $W$  such that $X^\circ=g(\th, W)$ on the canonical setup. Furthermore, there exists a process $Z$ such that $X^\star=g(\th, Z)$. The process $Z$ satisfies the SDE
\begin{equation}\label{eq:defZstar2} \dd Z_t = \mu_\th\big(Z_t\big) \dd t + \dd W_t \end{equation}
where the map $\mu_\th$ satisfies 
\begin{equation}\label{eq:def-mu} \si_\th(t,x)  \mu_\th(t,x) =b_\th^\star(t,x)-b_\th^\circ(t,x) .\end{equation}
Moreover,
\begin{equation}
\label{eq:relation-Zstar-Xstar} \dd X^\star_t = b^\circ_\th(t, X^\star_t) \dd t + \sigma_\th(t, X^\star_t) \dd Z_t, \qquad X^\star_0=u,\qquad t \in [0,T).
\end{equation}

\end{prop}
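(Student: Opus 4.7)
The plan is to build the map $g$ from strong existence and pathwise exactness of \eqref{xcirc}, to extract a drift correction $\mu_\th$ from absolute continuity via Girsanov's theorem, and then to piece these together into the process $Z^\star$.

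First I would obtain $g$ directly from the strong-solution representation for \eqref{xcirc}: pathwise exactness combined with the strong-existence hypothesis and its joint measurability delivers, by the Yamada--Watanabe-type correspondence spelled out in \S V.10 of \cite{RogersWilliams2}, a jointly measurable map $g:\Theta\times C([0,T],\mathbb{R}^{d'})\to C([0,T],\mathbb{R}^d)$ with $X^\circ = g(\th,Z^\circ)$ a.s., where $Z^\circ$ is the canonical driving Brownian motion. Next I would invoke Girsanov's theorem in view of the absolute continuity of $\bb{P}^\circ$ and $\bb{P}^\star$: since \eqref{xcirc} and \eqref{xstar} share the diffusion coefficient and differ only in drift, absolute continuity forces $b^\star_\th-b^\circ_\th$ to lie in the range of $\sigma_\th$ (otherwise the two laws would be mutually singular), yielding a progressively measurable $\mu_\th$ satisfying \eqref{eq:def-mu}, with $\int_0^T\|\mu_\th\|^2\,ds<\infty$ a.s.

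With $\mu_\th$ in hand, I would work on a probability space carrying $X^\star$ together with its driving Brownian motion $Z^\circ$ and set
\[Z^\star_t := Z^\circ_t + \int_0^t \mu_\th(s,X^\star_s)\,ds,\]
which is exactly the SDE \eqref{eq:defZstar2}, once $\mu_\th$ is reinterpreted as a functional of $Z^\star$ through the yet-to-be-established identity $X^\star=g(\th,Z^\star)$. Substituting $\sigma_\th\mu_\th = b^\star_\th-b^\circ_\th$ into \eqref{xstar} and rearranging then gives $dX^\star_t = b^\circ_\th(t,X^\star_t)\,dt + \sigma_\th(t,X^\star_t)\,dZ^\star_t$, which is \eqref{eq:relation-Zstar-Xstar}. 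Up to this point the argument is routine algebra on top of Girsanov's formula.

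The main obstacle is the remaining identification $X^\star=g(\th,Z^\star)$. The subtlety is that under the working measure $Z^\star$ is a semimartingale with drift $\mu_\th$ rather than a Brownian motion, while $g$ and the pathwise-uniqueness statement on which it rests are defined relative to Brownian drivers. My plan is to perform a Girsanov change of measure in the reverse direction: the exponential $\exp\bigl(-\int_0^T\mu_\th^\T\,dZ^\circ_s - \tfrac12\int_0^T\|\mu_\th\|^2\,ds\bigr)$ is a true martingale because it is precisely the reciprocal of the Radon--Nikodym derivative whose existence is furnished by the absolute-continuity assumption, so no additional Novikov or Kazamaki condition is needed. Under the tilted measure $\tilde{\bb{Q}}$, $Z^\star$ becomes a Brownian motion and \eqref{eq:relation-Zstar-Xstar} is a genuine instance of \eqref{xcirc} driven by $Z^\star$; pathwise exactness then yields $X^\star=g(\th,Z^\star)$ $\tilde{\bb{Q}}$-a.s., and equivalence of $\tilde{\bb{Q}}$ with the original measure extends the identity back. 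The one delicate checkpoint is verifying rigorously that absolute continuity alone is enough to make this reverse Girsanov density a bona fide martingale, and this is essentially a consistency check: the density is constructed as the inverse of a given Radon--Nikodym derivative between probability measures, and so automatically integrates to one.
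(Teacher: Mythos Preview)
Your proof is correct and follows essentially the same approach as the paper: both arguments use Girsanov's theorem to render $Z^\star$ a Brownian motion under an equivalent measure, then invoke pathwise uniqueness to obtain the identification $X^\star=g(\th,Z^\star)$, and both justify the martingale property of the Girsanov density directly from the assumed absolute continuity of $\bb{P}^\circ$ and $\bb{P}^\star$ rather than via a Novikov-type condition. The only cosmetic difference is direction---the paper starts from $Y=g(\th,Z^\circ)$, changes measure, and uses pathwise uniqueness of the $X^\star$ SDE to conclude $Y=X^\star$, whereas you start from $X^\star$, change measure, and use pathwise uniqueness of the $X^\circ$ SDE to conclude $X^\star=g(\th,Z^\star)$.
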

\begin{proof}Denote the law of $W$ by $\bb{W}$.
Existence of $g$ such that $X^\circ=g(\th, W)$ is implied by existence of a strong solution for the SDE for $X^\circ$. 
If $Y$ satisfies \[\dd Y_t = b_\th^\circ(t, Y_t) \dd t + \sigma_\th(t, Y_t) \dd W_t, \qquad Y_0=u,\qquad t \in [0,T) \]
then 
$Y = g(\th,W)$. 
Define
\[ L_\th = \exp\left(\int_0^T \mu_\th(t,Y_t) \dd W_t - \int_0^T \mu_\th(t,Y_t)^2 \dd t\right) \]
and assume for the moment that  $E^{\mathbb{W}} L_\th=1$. Define the measure $\mathbb{Z}_\th$ by $\dd \mathbb{Z}_\th = L_\th \dd \mathbb{W}$. 
By Girsanov's theorem, it follows that the process $Z$ defined by equation \eqref{eq:defZstar2} is a Brownian Motion under the measure $\mathbb{Z}_\th$. If we define $Y$ by $Y=g(\th,Z)$  then
\[\dd Y_t = b_\th^\circ(t, Y_t) \dd t + \sigma_\th(t, Y_t) \dd Z_t, \qquad Y_0=u,\qquad t \in [0,T) \]
under $\mathbb{Z}_\th$.
Plugging \eqref{eq:defZstar2}  into this equation shows that 
\[\dd Y_t = b^\star_\th(t, Y_t) \dd t + \sigma_\th(t, Y_t) \dd W_t, \qquad Y_0=u,\qquad t \in [0,T). \]
By pathwise uniqueness $Y=X^\star$ up to indistinguishability under $\bb{W}$ (because $W$ is a Wiener process). 
Hence, $X^\star=g(\th,Z)$ and \eqref{eq:relation-Zstar-Xstar} follows.
 We have
\begin{equation}\label{change}
 \frac{\dd \mathbb{Z}_\th}{\dd \mathbb{W}}(\cdot) =  \frac{\dd \mathbb{P}_\th^\star}{\dd \mathbb{P}_\th^\circ} (g(\th,\cdot))
\end{equation}
and henceforth  existence of $\mu_\th$ such that $E^{\mathbb{W}} L_\th=1$ follows from our assumption that $\mathbb{P}^\star_\th$ and $\mathbb{P}^\circ_\th$ are absolutely continuous. 
\end{proof}

We refer to the process $Z$ as the  {\it innovation process} corresponding to $X^\star$ (by analogy of the terminology of \cite{GolightlyWilkinsonChapter} and  \cite{ChibPittShephard}). 
Clearly, $X^\star$ is related to $Z$ just like $X^\circ$ is related to $W$. Note however that while the law of  $W$ does not depend on $\th$ under $\mathbb{W}$, the law of $Z$ does depend on $\th$ under $\mathbb{W}$.

In the following we will denote the Radon-Nikodym derivative between $\mathbb{P}^\circ$ and $\mathbb{P}^\star$ by $\Phi$: \[ \Phi_\th:= \frac{\dd \mathbb{P}_\th^\star}{\dd \mathbb{P}_\th^\circ}.\]

%%%%%%%%%%%%%%

\subsection{Algorithm}\label{sec:algo}
In this section we present an algorithm to  sample from the posterior of $\th$  given the discrete observations  $\scr{D}=\{X_0 = u, X_{t_1} = x_1, \ldots, X_{t_n} = x_{n}\}$. Denote the prior density on $\th$ by $\pi_0$ and let $q(\th^\circ \mid \th)$ be the density for proposing $\th^\circ$ given the current value $\th$. 
The idea is to define a Metropolis--Hastings sampler on $(\th, Z)$ instead of $(\th, X^\star)$ where $Z$ is the innovation process from the previous section.  

More precisely, we construct  a Markov chain for  $(\th, (Z_i)_{ 1\le i \le n})$, where each $Z_i$ is an innovation process corresponding to the bridge $X^\star_i$ connecting observation $x_{i-1}$ to $x_{i}$.

\begin{algorithm}\label{alg1}\

\begin{enumerate}
\item {\bf Initialisation.} Choose a starting value for $\th$ and sample $i = 1,\dots, n$ Wiener processes $W_i$ and set $Z_i = W_i$. 

\item {\bf Update $Z \mid (\th, \scr{D})$.} Independently, for $1\le i \le n$ do
\begin{enumerate}
\item  Sample a Wiener process $Z^\circ_i$.
\item  Sample $U \sim \scr{U}(0,1)$. Compute
\[ A_1 =\frac{\Phi_{\th}(g(\th,Z^\circ_i))}{\Phi_{\th}(g(\th, Z_i))}. \]
Set 
\[ Z_i := \begin{cases} Z^\circ_i& \text{if}\quad U\le A_1 \\
 Z_i & \text{if}\quad U> A_1 \end{cases}. \]
\end{enumerate}

\item {\bf Update $\th \mid (Z, \scr{D})$.} 
\begin{enumerate}
\item  Sample ${\th^\circ} \sim q(\cdot \mid \th)$.
\item  Sample $U \sim \scr{U}(0,1)$. Compute
\[	 A_2=\frac{q(\th \mid {\th^\circ})}{q({\th^\circ} \mid \th)}  \frac{\pi_0({\th^\circ})}{\pi_0(\th)} \prod_{i=1}^n \frac{p_{{\th^\circ}}(t_{i-1},x_{i-1}; t_i, x_i)}{p_{\th}(t_{i-1},x_{i-1}; t_i, x_i)}
\frac{\Phi_{\th^\circ}(g({\th^\circ}, Z_i))}{\Phi_{\th}(g(\th,Z_i))} \]
Set 
\[\th := \begin{cases} {\th^\circ} & \text{if}\quad U\le A_2 \\
 \th & \text{if}\quad U> A_2 \end{cases}. \]
\end{enumerate}
\item Repeat steps (2) and (3).
\end{enumerate}
\end{algorithm}

Note that in none of these steps we need to compute innovations $Z$ from $X^\star$.  This is a consequence of adapting the definition of the innovations to the bridge proposals being used. 

In step (2) an independent Metropolis-Hastings step is used. Instead, one can also propose $Z^\circ$ based on the current value of $Z$ in the following way
\begin{equation}\label{eq:rw-innovation} Z^\circ_t = \sqrt{\rho} Z_t + \sqrt{1-\rho} W_t, \end{equation}
where $\rho\in [0,1)$ and $W$ is a Wiener process  under $\bb{W}$ that is independent of $Z$.
In this case 
\[ A_1 = \frac{\left(\!\dd \bb{Z}_\th \big/\!\dd\bb{W}\right) (Z)}{\left(\!\dd \bb{Z}_\th \big/\! \dd \bb{W}\right) (Z^\circ)}
\frac{\dd Q_\rho}{\dd Q^T_\rho}(Z^\circ, Z), \]
where $Q_\rho(x,y)=Q_\rho^T(y,x)$. Here we use the general formulation of the Metropolis-Hastings algorithm as explained in \cite{tierneyAnnapl}.  The second term equals one by symmetry of $Q(\cdot, \cdot)$. 
This implies that the acceptance probability in step 2(b) remains the same. 

\begin{rem}
Different proposals can be obtained by varying $b^\circ$ in \eqref{xcirc} and it   is clear that  the mapping $g$ varies accordingly. A good choice obviously affects the acceptance probability of step 2 in algorithm \ref{alg1}. However, it affects the acceptance probability of step 3 as well as this step is a joint update of $(\th, X^\star)$. This implies that a proposal $\th^\circ$ in step 3 which is ``good'' (in the sense of being like a draw from the posterior of $\th$), may nevertheless be rejected if the mapping $g$ is such that  $g(\th^\circ, Z)$ does not resemble a bridge with drift and diffusion coefficient indexed by $\th^\circ$.   Ideally, one would take $g=g_{\rm opt}$, where $g_{\rm opt}$ is defined by the relation $X^\star = g_{\text{\rm opt}}(\th, W)$, with $W$ denoting a Wiener process.
\end{rem}

\begin{thm}\label{lem:irreducible}
Suppose $q(\th^\circ \mid \th)$ is almost everywhere strictly positive on the support  of the prior for $\th$. Then the chain induced by  algorithm \ref{alg1} is irreducible.
\end{thm}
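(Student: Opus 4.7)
My plan is to prove the stronger statement that the one-step transition kernel $P$ of Algorithm \ref{alg1} satisfies $P(x, B) > 0$ for every starting state $x = (\theta, z_1, \ldots, z_n)$ and every Borel set $B$ in the state space with $\pi(B) > 0$, where $\pi$ denotes the joint posterior of $(\theta, Z^\star_1, \ldots, Z^\star_n)$ given $\scr{D}$. This immediately gives $\pi$-irreducibility. Three ingredients will drive the argument: positivity of the acceptance ratios $A_1, A_2$ almost surely under the proposal distributions; the full-support property of those proposal distributions; and a Fubini argument combining them.

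First, I would observe that the standing assumption that $\mathbb{P}^\star_\theta$ and $\mathbb{P}^\circ_\theta$ are equivalent renders $\Phi_\theta$ strictly positive $\mathbb{P}^\circ_\theta$-a.s., and by the change-of-variables identity \eqref{change} the measures $\mathbb{Q}^\star_\theta$ and $\mathbb{Q}^\circ$ are likewise equivalent. Consequently $A_1 = \Phi_\theta(g(\theta, Z^\circ_i))/\Phi_\theta(g(\theta, Z^\star_i)) > 0$ holds $\mathbb{Q}^\circ$-a.s.\ over the proposal $Z^\circ_i$, and similarly $A_2 > 0$ a.s.\ over $\theta^\circ \sim q(\cdot \mid \theta)$. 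Under $\pi$ the innovations $Z^\star_i$ are conditionally independent given $\theta$ with $Z^\star_i \sim \mathbb{Q}^\star_\theta$, and hence equivalent to $(\mathbb{Q}^\circ)^{\otimes n}$ conditionally on $\theta$. So $\pi(B) > 0$ yields, by Fubini, a set $\Theta_B$ of positive marginal-posterior measure such that $(\mathbb{Q}^\circ)^{\otimes n}(B_{\theta^\circ}) > 0$ for every $\theta^\circ \in \Theta_B$, writing $B_{\theta^\circ}=\{z:(\theta^\circ,z)\in B\}$.

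Next I would bound $P(x, B)$ below by the probability of the joint event that in step (2) all $n$ proposals $Z^\circ_i$ are accepted and the tuple falls in $B_{\theta^\circ}$, and in step (3) the subsequent proposal $\theta^\circ \in \Theta_B$ is accepted. Written as an iterated integral against $q(\theta^\circ \mid \theta)\, d\theta^\circ \otimes (\mathbb{Q}^\circ)^{\otimes n}$, the integrand is $\mathbf{1}_B(\theta^\circ, z^\circ)$ multiplied by a strictly positive product of acceptance ratios (by the previous paragraph); since the hypothesis gives $q(\cdot \mid \theta) > 0$ almost everywhere on the support of $\pi_0$ (which contains $\Theta_B$), and the inner integral over $z^\circ$ is positive for $\theta^\circ \in \Theta_B$, the whole expression is strictly positive.

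The main obstacle is the sequencing of the two sub-steps: step (2) updates $Z^\star$ using the \emph{current} $\theta$, and step (3) then updates $\theta$ using the \emph{new} $Z^\star$. This is benign because the proposal $Z^\circ_i$ in step (2) is drawn from Wiener measure $\mathbb{Q}^\circ$ independently of $\theta$, so the new $Z^\star$ inherits a component distributed as $(\mathbb{Q}^\circ)^{\otimes n}$ regardless of the $\theta$ used in $A_1$; step (3) is then an independent $\theta^\circ$-proposal operating on this new $Z^\star$. A minor side issue is to specify the reference measure for ``almost everywhere''; under the natural convention (Lebesgue on $\Theta$ dominating the prior) the Fubini argument goes through without modification.
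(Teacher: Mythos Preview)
Your proof is correct and follows essentially the same approach as the paper: both arguments hinge on the observation that $\Phi_\theta$, being a Radon--Nikodym derivative between equivalent measures, is almost surely strictly positive and finite, and that together with positivity of the transition densities this forces $A_1$ and $A_2$ to be strictly positive. The paper's version is terser---it simply asserts that positivity of the acceptance ratios yields irreducibility---whereas you spell out the supporting Fubini argument and address the sequencing of the two sub-steps; this added detail is welcome but does not constitute a different route.
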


\begin{proof}
Step 2 constitutes a step of a MH-sampler with independent proposals. The expression for $A_1$ follows directly from equation \eqref{change}. The expression for $A_2$  in step 3 follows in exactly the same way as equation \eqref{eq:Atoy} was established in the toy-example (Cf.\ section \ref{sec:toyproblem}). 
 The remaining observation needed is the following: 
As $\Phi_\th$ is the Radon-Nikodym derivative between two equivalent distributions, it is almost surely strictly positive and finite. Since the transition densities are strictly positive as well,  both $A_1$ and $A_2$ are strictly positive and the result follows.
\end{proof}

At first sight, it may seem that algorithm \ref{alg1} is not of much practical value. First of all, the mapping $g$ is unknown. However, as any algorithm derived in continuous time ultimately has to be approximated by discretisation, we can choose a discretisation level and compute $Y = g(\th, Z)$ on a fine grid by discretising the stochastic differential equation
\[ \dd Y_t = b^\circ_\th(t, Y_t) \dd t + \si_\th(t, Y_t) \dd Z_t. \]
Second, it seems impossible to compute the acceptance probabilities in steps 2 and 3 because $\Phi_\th$ depends on $p$ and $p$ explicitly pops up in the formula for $A_2$. However, it turns out that for many choices of $b^\circ$ the unknown transition density $p$ only appears as a multiplicative constant in $\Phi_\th$ such that it cancels the $p$ in the expression for $A_2$. For future reference, we introduce the following definition. 
\begin{defn}\label{def:feasible}
We call a proposal $X^\circ$ as defined in equation \eqref{xcirc} {\it feasible} if $b^\circ_\th$ is such that both $A_1$ and $A_2$ appearing in algorithm \ref{alg1} do not depend on the transition density $p$.
\end{defn}

In section \ref{sec:proposals} we will give examples of classes of feasible proposals.

 %%%%%%%%%%%%%%%%%%%%%%%%%
\subsection{Partially conjugate series prior for the drift}\label{subsec:conj}
In this subsection we study specific cases of algorithm \ref{alg1}  when the drift is of the form
\begin{equation}\label{sumprior}
b_{\vartheta}(x) = \sum_{i=1}^N \vartheta_i \phi_i(x)
\end{equation}
where $\vartheta=(\vartheta_1, \ldots,\vartheta_N)$ is an unknown  parameter and $\phi_1,\ldots, \phi_N$ are known  functions on $\RR^d$. We assume the diffusion coefficient  is parametrised by the parameter $\ga$. We denote the vector of all unknown parameters by $\th=(\vartheta, \gamma)$ and assume these are assigned independent priors. With slight abuse of notation we use $\pi_0(\vartheta)$ and $\pi_0(\gamma)$ to denote the priors on   $\vartheta$ and $\gamma$ respectively (the argument in parentheses will clarify which prior is meant). In this case it is convenient to choose a conjugate Gaussian prior for the coefficients, $\vartheta_i \sim \scr N(0, \xi^2_i)$, 
for positive scaling constants $\xi_i$. Priors for the drift obtained by specifying a prior distribution on $\vartheta$ were previously considered in \cite{Kuchler-Sorensen}, \cite{Bladt} and \cite{vdMeulen}. Upon completing the square, it follows that  the distribution of $\vartheta$ conditional on $\gamma$ and the full path $Y$ of the diffusion  is multivariate normal with mean vector $W_\ga^{-1} \mu_\ga$ and covariance matrix $W_\ga^{-1}$. 
We define for $k, \ell \in \{1,\ldots, d\}$,
\begin{align*} \mu_\ga[k] &= \int_0^T \phi_k(Y_t)'a_\gamma^{-1}(Y_t)\dd Y_t \\ 
\Sigma_{\ga}[k,\ell] & = \int_0^T \phi_k(Y_t)' a_\gamma^{-1}(Y_t) \phi_\ell(Y_t) \dd t \\
W_\ga &=\Sigma + \diag(\xi_1^{-2},\ldots, \xi_N^{-2}).
\end{align*}
(For a vector $x\in \RR^n$ we denote the $i$-th element by $x[i]$. To emphasise the dependence on $Y$ we sometimes also write $\mu_\gamma(Y)$, $W_\gamma(Y)$ etc).
This leads to a natural adaptation of algorithm \ref{alg1} from section \ref{sec:algo}. 
\begin{algorithm}\label{alg2}
Steps 1, 2 and 4 as in algorithm \ref{alg1}. {\it Assume that $\sigma$ is invertible.}  Step 3 is given by 
\begin{enumerate}
\item[3.1] {\bf Update }$\gamma  \mid (\vartheta, Z, \scr{D})$.

\begin{enumerate}
\item Sample ${\gamma^\circ} \sim q(\cdot \mid \gamma)$.
\item  Sample $U \sim \scr{U}(0,1)$. Compute
\[	 A_3=\frac{q(\gamma \mid {\gamma^\circ})}{q({\gamma^\circ} \mid \ga)}  \frac{\pi_0({\gamma^\circ})}{\pi_0(\ga)}\prod_{i=1}^n\frac{p_{({\gamma^\circ},\vartheta)}(t_{i-1},x_{i-1}; t_i, x_i)}{p_{(\ga,\vartheta)}(t_{i-1},x_{i-1}; t_i, x_i)}
\frac{\Phi_{ ({\gamma^\circ},\vartheta)}(g(({\gamma^\circ},\vartheta), Z_i)}{\Phi_{ (\ga,\vartheta)}(g((\ga,\vartheta),Z_i))}  \]
Set 
\[\ga := \begin{cases} {\gamma^\circ} & \text{if}\quad U\le A_3 \\
 \ga & \text{if}\quad U> A_3 \end{cases}. \]
\end{enumerate}

\item[3.2] {\bf Update }$\vartheta \mid  (\gamma, Z,  \scr{D})$.  
\begin{enumerate}
\item Compute $\mu_g = \mu_\ga(g((\vartheta, \gamma), Z))$ and $W_\gamma = W_\ga(g((\vartheta, \gamma), Z))$.
\item Sample ${\vartheta^\circ} \sim \scr N(W_\ga^{-1}\mu_\ga, W_\ga^{-1})$.
\item Compute $Z^\circ$ such that $g(({\vartheta^\circ}, \gamma), Z^\circ) = g((\vartheta, \gamma), Z)$. Set  $\vartheta = \vartheta^\circ$ and $Z = Z^\circ$.
\end{enumerate}

\end{enumerate}
\end{algorithm}
Note that computation of $Z^\circ$ in step 3.2(c) requires invertibility of $\si$. 

\begin{proof}
Suppose $(\vartheta, \gamma, Z) \sim \pi $, where $\pi$ denotes  the posterior distribution. Consider the map $f\colon (\vartheta, \gamma, Z) \mapsto (\vartheta, \gamma,  X^\star)$, where $X^\star=g((\vartheta, \gamma), Z)$. 
We show that step 3.2 preserves $\pi$. The distribution of $(\vartheta, \gamma, X^\star)$ is the image measure of the posterior distribution $\pi$ of the tuple $(\vartheta, \gamma, Z)$ under $f$ and coincides with the posterior distribution of $(\vartheta, \gamma, X^\star)$. Denote the image measure of $\pi$ under $f$ by  by $\pi \circ f^{-1}$.   In steps 3.2(a) and 3.2(b) we apply the mapping $f$, followed by a Gibbs step  in which we draw $\vartheta^\circ$ conditional on $(\gamma, X^\star)$. The latter preserves $\pi \circ f^{-1}$. Hence $(\vartheta^\circ, \gamma, X^\star) \sim \pi \circ f^{-1}$. In step 3.2(c) we we compute $(\vartheta^\circ, \gamma, Z^\circ)$ as pre-image of $(\vartheta^\circ, \gamma, X^\star)$ under $f$ (this is possible as we assume $\si$ to be invertible). Hence $(\vartheta^\circ, \gamma, Z^\circ) \sim \pi$.
\end{proof}

A variation of this algorithm is obtained in case the drift is of the form specified in equation \eqref{sumprior} and the diffusion coefficient depends on both $\vartheta$ and $\ga$. In this case we can update $\gamma$ just as in algorithm \ref{alg2}. Updating $\vartheta$ can be done using a random walk type proposal of the form
\[ q(\vartheta^\circ \mid \vartheta) \sim N(\vartheta, \alpha V), \]
with $\alpha$ a positive tuning parameter. Motivated by the covariance matrix of the prior exploited in the case of partial conjugacy we propose to replace $V$ by $W^{-1}_{(\vartheta,\gamma)}$. 
By this choice, if two components $\vartheta_i$ and $\vartheta_j$ are  strongly correlated, the proposed local random walk proposals have the same correlation structure, which can improve mixing of the chain.

\begin{algorithm}\label{alg3} The same algorithm as Algorithm \ref{alg2} without the invertibility assumptions and Step 3.2 replaced by
\begin{enumerate}
\item[3.2'] {\bf Update } $\vartheta \mid (\gamma,  Z , \scr{D})$. 
\begin{enumerate}
\item Set $X^\star=  g(\vartheta, Z)$.
\item Compute $W_{(\vartheta,\ga)}$.
\item Sample ${\vartheta^\circ} \sim \scr N(\vartheta, \alpha^2 W_{(\vartheta,\ga)}^{-1})$.
\item Compute $W_{({\vartheta^\circ},\ga)}$.
\item   Sample $U \sim \scr{U}(0,1)$. Compute
\begin{align*}	 A_4 &=   \frac{|W_{{\vartheta^\circ}}|^{1/2}}{|W_{\vartheta}|^{1/2}}  \exp\left(- \frac{1}{2\alpha^2}  ({\vartheta^\circ}-\vartheta)'  (W_{\vartheta^\circ}-W_{\vartheta})  ({\vartheta^\circ}-\vartheta)\right)\\ & \qquad \times \frac{\pi_0({\vartheta^\circ})}{\pi_0(\vartheta)}\prod_{i=1}^n \frac{p_{(\ga,{\vartheta^\circ})}(t_{i-1},x_{i-1}; t_i, x_i)}{p_{(\ga,\vartheta)}(t_{i-1},x_{i-1}; t_i, x_i)}
\frac{\Phi_{(\ga,{\vartheta^\circ})}(g((\ga,{\vartheta^\circ}), Z_i))}{\Phi_{ (\ga,\vartheta)}(g((\ga,\vartheta),Z_i))} .\end{align*}
Set 
\[\vartheta := \begin{cases} {\vartheta^\circ} & \text{if}\quad U\le A_4 \\
 \vartheta & \text{if}\quad U> A_4 \end{cases}. \]
\end{enumerate}
\end{enumerate}
\end{algorithm}

The following argument gives some guidance in the choice of $\alpha$. If the target distribution is a $d$-dimensional Gaussian distribution $\scr N_d(\mu, \Sigma)$ and the proposal is of the form ${\vartheta^\circ} \sim q({\vartheta^\circ},\vartheta) \sim \scr N_d (\vartheta , \alpha^2\Sigma_q)$, then optimal choices for $\alpha$ and $\Sigma_q$ are given by  $\Sigma_q = \Sigma$ and $\alpha = 2.38/\sqrt d$, cf.\ \cite{RosenthalChapter}. Hence, we will choose $\alpha=2.38/\sqrt{\dim(\vartheta)}$, which corresponds to an average acceptance probability equal to $0.234$.  Although this procedure will not be optimal for the examples considered, it provides an automatic choice and avoids tedious pilot runs.

%%%%%%%%%%%%%%%%%%%%%%%%%%%%%%

\section{Feasible proposals}\label{sec:proposals}

In this section we discuss examples of proposals that enable application of algorithm \ref{alg1}. First we discuss the prerequisites for this in general.   Trivially, we should be able to sample a discretised version of the process $X^\circ$. This can  be done using a discretisation method for stochastic differential equations, such as Euler-discretisation. Secondly, it is required that the assumptions of proposition  \ref{prop:existence-g} are satisfied.  
Third, we need our proposal to be feasible in the sense of definition \ref{def:feasible}. This requires choosing $b^\circ$ such that $\Phi_\th = \dd \mathbb{P}_\th^\star/\dd \mathbb{P}_\th^\circ$   contains the transition density $p$ solely as a multiplicative factor in the denominator. {\it As $\th$ is fixed throughout this section, we drop it temporarily from our notation.} It is not too hard to see why $p$ would only show up  as a multiplicative factor in the denominator. Denote the laws of $X$, $X^\circ$ and $X^\star$ on $C[0,t]$ by $\PP^t$, $\PP^{\circ,t}$ and $\PP^{\star,t}$ respectively. If $t=T$ we will omit  time dependence. We have 
\[ \frac{\dd \PP^{\star,t}}{\dd \PP^{\circ,t}}(X^\circ) = \frac{p(t,X^\circ_t; T,v)}{p(0,u; T,v)} \frac{\dd \PP^{t}}{\dd \PP^{\circ,t}}(X^\circ) \]
(see for instance the proof of proposition 1 in \cite{Schauer}). Hence $p$ shows up only in the first term on the right-hand-side. Upon taking the limit $t\uparrow T$ of the expectation on the right-hand-side, the term $p(t,X^\circ_t; T,v)$ may vanish, depending on the precise form of $b^\circ$. For the proposals of sections \ref{sec:DHprop} and \ref{sec:guided-proposals} ahead, a formal proof of this can be found in \cite{DelyonHu} and \cite{Schauer}  respectively. In the following we will  sketch the argument for the disappearance of $p(t,X^\circ_t; T,v)$ under $t\uparrow T$.

%------------------------
\subsection{Proposals by Delyon and Hu}\label{sec:DHprop}
\cite{DelyonHu} introduced proposals for which 
\begin{equation}\label{xcirc2} 
b^\circ(t,x) =  \la b(t, x) + \frac{v-x}{T-t}, \end{equation}
where $\la \in \{0,1\}$. When evaluated for $x=X^\circ_t$, the pulling term $(v-X^\circ_t)/(T-t)$  forces $X^\circ$ to hit $v$ at time $T$.  Sufficient conditions for absolute continuity  and  expressions for the likelihood ratio of the laws of $X^\star$ and $X^\circ$ are derived in \cite{DelyonHu}. However,  the proportionality constants in the derived likelihood ratio are missing. Whereas for generating diffusion bridges using a MH-sampler these constants are irrelevant, they do matter for step 3 of algorithm \ref{alg1} (because the constants depend on $\th$). In case of a one-dimensional diffusion, the constant in the Radon-Nikodym derivative is derived in \cite{PapaspiliopoulosRoberts}. The extension to the multivariate case brings no surprises. Here we consider the case $\la=0$.  It turns out that  the  derivative can be obtained by rewriting the expression obtained from applying Girsanov's theorem
\begin{align}\label{eq:dhderivation}  &\frac{\dd \bb{P}^{\star,t}}{\dd \bb{P}^{\circ,t}}(X^\circ) = 
\exp\left(J_t(X^\circ)\right) \times \frac{p(t, X^\circ_t; T,v)}{\varphi(v; X^\circ_t, (T-t) a(t,X^\circ_t))} \\ &\quad \times   \frac1{p(0,u; T,v)} (2\pi T)^{-d/2} |\det a(t,X^\circ_t)|^{-1/2} \exp\left(-\frac1{2 T}   (v-u)^\T a(0,u)^{-1}(v-u)\right).\nonumber
\end{align}
  Here $\varphi(x; \mu,a)$ denotes the value of the normal density with mean $\mu$ and variance $a$, evaluated at $x$ and the functional $J_t$ is defined by 
\begin{align*} J_t(X^\circ)&= \int_0^t b(s,X^\circ_s)^\T a^{-1}(s,X^\circ_s) \dd X^\circ_s- \frac12 \int_0^t b(s,X^\circ_s)^\T  a^{-1}(s,X^\circ_s) b(s,X^\circ_s)\dd s  \\ & -\frac12 \int_0^t (T-s)^{-1}(v-X^\circ_s)^\T  \diamond \dd a^{-1}(s,X^\circ_s) (v-X^\circ_s),  \end{align*}
where  the $\diamond$-integral is obtained as the limit of sums where the integrand is computed at the right limit of each time interval as opposed to the left limit used in the definition of the It\=o integral. It can be shown that all terms are well-behaved under the limit $t\uparrow T$ and that 
\[ \Phi(X^\circ)= \exp\left(J_T(X^\circ)\right) \frac{\varphi(v; u, a(0,u))}{p(0,u; T,v)} \sqrt{\left|\frac{\det a(0,u)}{\det a(T,v)}\right|}. \] The term $p(t, X^\circ_t; T,v)$ appearing in \eqref{eq:dhderivation}  is essentially cancelled by $\varphi(v; X^\circ_t, (T-t) a(t,X^\circ_t))$ in the limit.
From the expression for $\Phi$  we see that the factor $p(0,u; T,v)$ solely appears as a multiplicative constant in the denominator of the Radon-Nikodym derivative between the target bridge and proposal bridge. Therefore, the proposals derived from \eqref{xcirc2} are feasible.

%------------------------

\subsection{Guided proposals}\label{sec:guided-proposals}
In this section we review a flexible class of proposal processes $X^\circ$ that was developed and studied in  \cite{Schauer}. We will use this framework in the remainder and provide a recap of the relevant results in this section. For precise statements of these results we refer the reader to \cite{Schauer}. 

The basic idea is to replace the generally intractable transition density $p$ that appears in the  dynamics of the target bridge (see equations \eqref{xstar} and \eqref{bstar}) by  the transition density of a diffusion process $\tilde{X}$  for which it is known in closed form. Assume $\tilde{X}$ satisfies the SDE $\dd \tilde{X}_t = \tilde{b}(t, \tilde{X}_t) \dd t + \tilde{\si}(t, \tilde{X}_t) \dd W_t$.  Denote the transition density of $\tilde{X}$ by $\tilde{p}(s,x; T,v)$ and set $\tilde{a}=\tilde{\si}\tilde{\si}^\T$. Define the process $X^\circ$ as the solution of the SDE \eqref{xcirc} with 
\begin{equation}\label{bcirc} \tag{$\circ\circ$}
			b^\circ(t, x) = b(t,x) + a(t, x) \nabla_x  \log  \tilde p(t,x;T,v).
\end{equation} 
A process $X^\circ$ constructed in this way is referred to as a {\it guided proposal} (a guiding term is superimposed on the drift to ensure the process hits $v$ at time $T$). 

We reduce notation by writing $p(s,x)$ for $p(s,x; T,v)$. 
Define 
\begin{equation}\label{eq:notation}
 R(s,x) = \log  p(s,x), \quad
	 r(s,x) = \nabla  R(s,x), \quad  H(s,x) = -\Delta  R(s,x), 
\end{equation}
where $\nabla$ and $\Delta$ denote the gradient and Laplacian with respect to $x$ respectively. Similarly, write $\tilde{p}(s,x)$ instead of $\tilde{p}(s,x; T,v)$, etc. In  \cite{Schauer} sufficient conditions for absolute continuity of $\PP^\star$ and $\PP^\circ$ are established together with a closed form expression for the Radon-Nikodym derivative. It turns out that  
\[ \frac{\dd \PP^{\star,t}}{\dd \PP^{\circ,t}}(X^\circ)  =
 \frac{\tilde p(0,u)}{p(0,u)}\frac{p(t, X^\circ_{t}; T, v)}{\tilde p(t, X^\circ_{t}; T, v)} \: \exp\left( \int_0^t G(s,X^\circ_s) \dd s\right), \]
where $G$ is given by 
\begin{align}\label{eq:G} G(s,x) &= (b(s,x) - \tilde b(s,x))^\T \tilde r(s,x)\nonumber \\ & \qquad -  \frac12 \trace\left(\left[a(s, x) - \tilde a(s, x)\right] \left[\tilde H(s,x)-\tilde{r}(s,x)\tilde{r}(s,x)^\T\right]\right)
\end{align}
(Cf.\ proposition 1 in \cite{Schauer}). 
Upon taking the expectation and the limit $t\uparrow T$ it is proved in \cite{Schauer} that 
\begin{equation}\label{likeli}
\Phi(X^\circ) = \frac{\tilde{p}(0, u)}{p(0, u)}
 \exp\left( \int_0^T G(s,X^\circ_s) \dd s\right).
\end{equation}
This time the term $p(t, X^\circ_t; T,v)$ is essentially cancelled by $\tilde p(t, X^\circ_t; T,v)$ and henceforth disappears in the limit. From the expression of $\Phi$ we deduce that guided proposals are feasible. 

The class of linear processes,
\begin{align}\label{linsdehomog}
		\dd  \tilde X_t = \tilde B(t) \tilde X_t\dd t  + \tilde\beta(t)  \dd t +  \tilde \sigma(t)  \dd  W_t,
\end{align}
 is a flexible class  with known transition densities and its induced guided proposals  satisfy the conditions for absolute continuity derived in \cite{Schauer} under weak conditions on $\tilde{B}$, $\tilde\beta$ and $\tilde{\sigma}$. Proposal processes $X^\circ$ derived by choosing a linear process as in \eqref{linsdehomog} will be referred to as {\it linear guided proposals}. One key requirement for absolute continuity of $X^\star$ and $X^\circ$ is that $\tilde\sigma$ is such that $\tilde{a}(T)=(\tilde\sigma \tilde\sigma')(T)= a(T,v)$. 
A particularly simple type of guiding proposals is obtained upon choosing $\dd\tilde{X}_t = \tilde\beta(t) \dd t+ \si(T,v) \dd W_t$. For this particular choice
\begin{equation}\label{eq:simple-linear} b^\circ(t,x)= b(t,x)  + \frac{a(t, x) a(T,v)^{-1}}{T-t}\left(v-x- \int_t^T \tilde\beta(s) \dd s\right). \end{equation}
 Depending on the precise form of $b$ and $\si$ it can nevertheless be  advantageous to use guided proposals induced for non-zero $\tilde{B}$. In section \ref{sec:choiceguided} we discuss several strategies for choosing the process $\tilde{X}$.  
\begin{rem}
For  guided proposals,  it is easily seen that the process  $Z$ appearing in proposition \ref{prop:existence-g} satisfies \eqref{eq:def-mu} with 
$\mu(t,x)=\si^\T (t,x) \left(r(t,x) - \tilde{r}(t,x)\right)$.
\end{rem}

\begin{rem}
In case $b$ and $\si$ are of the forms $b(s,x)=\beta(s) + B(s) x$ and $\si(s,x)=\si(s)$, then we can trivially take $\tilde{b}=b$ and $\tilde\si=\si$. By equation \eqref{eq:G} it follows that in this case $\Psi\equiv 1$. This implies that $A_2$ in algorithm \ref{alg1} does not depend on $\{Z_i,\, i=1,\ldots, n\}$ and simulating diffusion bridges is unnecessary. That is, step 2 of algorithm \ref{alg1} can be omitted.  
\end{rem}

%--------------------------------------
\subsection{Drift-independent guided proposals}\label{sec:adjustedguided-proposals}

The proposals with $\la=1$ provided by \cite{DelyonHu} are a special case of guided proposals only in case $\si$  is constant. These are recovered upon choosing $\tilde{b}\equiv 0$ and $\tilde{\si}=\si$.  Proposals with $\la=0$ are a special case when both $b$ and $\si$ are constant and correspond to choosing $\tilde{b}=b$ and $\tilde{\si}=\si$. The latter type of proposals enjoys  quite some popularity in the literature, especially when discretised with the multiplicative  correction term added to the diffusion term introduced by \cite{DurhamGallant} (the resulting discrete time proposal is called the {\it modified diffusion bridge}, we get back to this in section \ref{sec:tc}). As such proposals are independent of the drift these can only work satisfactory if the drift in locally constant.

 In this article we do not aim to make a formal comparison of guided proposals and Delyon-Hu proposals.  Nevertheless, we wish to remark that for the latter class of proposals  both in case $\la=0$ and when $\la=1$ the resulting bridges  may not resemble true bridges. An illuminating example is given in the introductory section of \cite{Schauer} and we refer to that paper for further discussion on this rather subtle issue.  In case the reader is uncomfortable with the additional freedom for choosing the process $\tilde{X}$,  proposals similar (but not equal to) Delyon-Hu proposals can be obtained by taking $\dd \tilde{X}_t=\si(T,v) \dd W_t$, where $\si(T,v)\si(T,v)^\T=a(T,v)$. In that case we get proposals with
\[  b^\circ(t,x)= b(t,x)+ a(t,x) a(T,v)^{-1} \frac{v-x}{T-t}. \]
Proposals that ignore the drift completely can be  defined by \[ b^\circ(t,x)= a(t,x) a(T,v)^{-1} \frac{v-x}{T-t}, \]
We call these {\it drift-independent guided proposals}. 
The acceptance probability for drift-independent proposals can easily be obtained from \eqref{likeli} and equals
\begin{align*} \Phi(X^\circ)&= \frac{\tilde{p}(0,u)}{p(0,u)} \exp\left( \int_0^T G(s,X^\circ_s) \dd s+ \int_0^T b(s, X^\circ_s)^\T a^{-1}(s,X^\circ_s) \dd X^\circ_s\right. \\ & \left.  - \frac12 \int_0^T b(s, X^\circ_s)^\T a^{-1}(s,X^\circ_s)\left[b(s, X^\circ_s)+2 a(s,X^\circ_s) a(T,v)^{-1} \frac{v-X^\circ_s}{T-s}\right] \dd s \right),
\end{align*}
where $G$ is computed with $\tilde{b}\equiv 0$ and $\tilde{a}=a(T,v)$.

%--------------------------------------
\subsection{Choice of guided proposals}\label{sec:choiceguided}

In this section we discuss the choice of guided proposals. We propose the following strategies:
\begin{enumerate}
\item {\it Linearisation of the drift}. In some examples there is a natural point at which to linearise,  as in example \ref{ex:arctan}. If this is not the case, one can use a (weighted) regression, as explained in example \ref{ex:chemreaction}.  

\item {\it Solving the dynamical system associated to the SDE}. Suppose $x(t)$ satisfies the deterministic differential equation
\begin{equation}\label{nonoise}
\dd x(t) = b(t, x(t))\dd t,\quad x(0) = x_0.
\end{equation}
Then 
\begin{equation}\label{spacetotime}
\dd \tilde X_t = b(x(t)) \dd t + \tilde \sigma \dd W_t.
\end{equation}
 is clearly of the form \eqref{linsdehomog} with $\tilde\beta(t)=b(x(t))$,  $\tilde{B}\equiv 0$ and  $\tilde\sigma = \sigma(T,v)$.  This approach is illustrated in example \ref{ex:lotka}. 

\item {\it Combined approach}. Approximate $b(t,X_t)$ with $b(t,x(t)) +V(t,x(t)) (X_t-x(t))$, where $V(t,y)$ is the matrix with elements $V(t,y)_{i,j}=\partial b_i(t,y)\,/\, \partial y_j$ for $y \in \RR^d$.  This gives linear guided proposals with 
\[ \tilde\beta(t)= b(t,x(t)) - V(t,x(t)) x(t) \quad \text{and} \quad \tilde B(t)=V(t,x(t)). \]
This is closely related to the linear noise approximation of the SDE for $X$ as used in \cite{Whitaker}. 

\item {\it Iterative linearisation procedures}.
A further technique using ideas by \cite{Whitaker} is obtained by setting $\tilde\beta(t)=b(t, \expec{ \bar X^\star(t)})$ 
where $\bar X^\star$ is a tractable diffusion bridge from $u$ to $v$ (derived for example from a preliminary linear approximation to $X$). 

We will always have $\tilde\beta(0) = b(0,u)$ and $\tilde\beta(T) = b(T,v)$. Linear interpolation gives
\begin{equation}\label{eq:propbeta-simple}
\tilde\beta(t) =  (1-t/T) b(0,u) +  (t/T) b(T,v).
\end{equation}

\end{enumerate}

\begin{ex}\label{ex:arctan}
Let $X$ be the  diffusion process described by the SDE
\begin{equation}\label{eq:sde-arctan}
\dd X_t = (\alpha \arctan(X_t) + \beta) \dd t + \sigma \dd W_t.
\end{equation}
If $\alpha< 0$, $\tfrac{\pi}{2}\alpha < \beta< -\tfrac\pi2 \alpha $ this process is mean reverting to $\tan(-\beta/\alpha)$. For $x\approx \tan(-\beta/\alpha)$
\[b(x) \approx \alpha \cos^2(-\beta/\alpha) (x -\tan(-\beta/\alpha)).\]
So it makes sense to take linear proposals with 
\[\tilde{B}=\alpha \cos^2(-\beta/\alpha),\quad \tilde{\beta}= \tfrac12 \alpha \sin(2 \beta/\alpha)\quad \text{and} \quad \tilde\sigma=\sigma.\]
\end{ex}

%--

\begin{ex}\label{ex:chemreaction}
Here we consider a simple example in which the dynamics of a chemical reaction network are approximated by a system of stochastic differential equations.   Suppose we have four reactions among chemicals $A$, $B$ and $C$:
\begin{align*} 
 \emptyset & \stackrel{\th_1}{\rightarrow} A  \qquad \qquad
 A  \stackrel{\th_2}{\rightarrow} B \\
 A + B  & \stackrel{\th_3}{\rightarrow} C \qquad \qquad 2 C   \stackrel{\th_4}{\rightarrow} \emptyset
\end{align*}
The amount of the chemicals $A$, $B$, $C$ at time $t$ can be modelled as a pure jump Markov process which can subsequently be approximated by the  diffusion process $X_t \in \RR^3$ which solves the \emph{Chemical Langevin Equation} (\cite{Fuchs}, chapter 4)
\begin{equation}\label{prokarsde}	 \dd X_t = S h_\th(X_t) \dd t + S \operatorname{diag}  (\sqrt{h_\th(X_t)})  \dd W_t 
\end{equation}
driven by a $\RR^4$-valued Brownian motion. Here
\[ S = \begin{bmatrix}  1 & -1 & -1 & 0 \\
0 & 1 & - 1&0 \\
0 & 0 & 1 & -2\\
\end{bmatrix} \]
 is the stoichiometry matrix of the system describing the  chemical reactions
and $h_\th(x) =\th \circ h(x)$
is a function describing the hazard for a particular reaction to happen.
Here $\circ$ denotes the Hadamard (or entrywise) product of two vectors and
\[\th = [\th_1, \th_2, \th_3, \th_4]^\T \qquad h(x) =[1,  x_1,  x_1 x_2,  x_3(x_3-1)/2]^\T. \]

We choose $\tilde B$ and $\tilde \beta$ to depend on $\th$ (but not on time) so that $\tilde{B} x + \tilde\beta$  approximates $b_\th(x)$. While it is possible to take different approximations specifically tailored for each bridge segment, it is computationally advantageous to work with a global approximation to $b_\th$ (as we need to evaluate $\tilde p$ in the expression for $A_2$, see also the discussion in section \ref{sec:impl}). To this end, we replace $h$ by a linear approximation $\tilde{h}$ which allows for obtaining $\tilde{B}_\th$ and $\tilde\beta_\th$ from the equation
\[ \tilde{B}_\th x +\tilde\beta_\th = S (\th \circ \tilde{h}(x)). \]
As the first two components of  $h(x)$ are linear, we take $\tilde{h}_1(x)=h_1(x)$ and $\tilde{h}_2(x)=h_2(x)$. We approximate $h_3$ by $\tilde{h}_3(x) = c_3+ u_{3,1} x_1 + u_{3,2} x_2$. Values for $c_3, u_{3,1}$ and $u_{3,2}$ are obtained from a weighted linear regression of $x_1 x_2$ on $x_1$ and $x_2$, with weights proportional to $x_1 x_2$. Similarly, we take $\tilde{h}_4(x)= c_4+ u_{4,3} x_3$.  Values for $c_4$ and $u_{4,3}$ are obtained from a weighted linear regression of $\frac12 x_3(x_3-1)$ on $x_3$, with weights proportional to $x_3(x_3-1)$.
We take a {\it weighted} regression in this way because for a good proposal the error matters more if the corresponding dispersion component is small. 
For $\tilde\sigma$ we choose $\tilde\sigma=S \operatorname{diag}  \left(\sqrt{\tilde{h}_\th(x_i)}\right)$ on the segment between times $t_{i-1}$ and $t_i$. 

Note that this approach for constructing $\tilde{B}$ and $\tilde\beta$ can be applied generally to stochastic differential equations arising from chemical reaction networks.

\end{ex}

\begin{ex}\label{ex:lotka}
The Lotka-Volterra model with multiplicative noise (cf. \cite{KhasminskiiKlebaner}) is given by the Stratonovich stochastic differential equation 
\begin{align}\label{lotka}
\begin{split}
\dd X_t &= \left(\theta X_t - X_tY_t\right)\dd t + \sigma X_t\circ \dd W^{(1)}_t,\qquad X_0 = x_0\\
\dd Y_t &= \left(-\theta Y_t + X_tY_t\right)\dd t + \sigma Y_t \circ \dd W^{(2)}_t,\qquad Y_0 = y_0.
\end{split}
\end{align}
By  It\=o's formula, $(\xi_t, \eta_t) = (\log X_t, \log Y_t)$ satisfies
\begin{align*}
\dd \xi_t &= \left(\theta  - \e^{\textstyle  \eta_t}\right)\dd t + \sigma\dd W^{(1)}_t, \qquad \xi_0 =\log x_0\\
\dd \eta_t &= \left(-\theta  + \e^{\textstyle  \xi_t}\right)\dd t + \sigma\dd W^{(2)}_t\qquad \eta_0=\log y_0.
\end{align*}
Proposals for a bridge  that hits $(\xi_T, \eta_T)=(\log X_T, \log Y_T)$ at time $T$  can be derived from the deterministic dynamical system associated with \eqref{lotka}.  
The deterministic system $(x, y)$ has trajectories $xy \e^{-\frac{1}{\theta} (x + y)} = K$ with $K$ depending on $x_0, y_0$. The trajectory can be parametrised by 
\[
x(z) = \frac{z}{2} \pm \sqrt{z^2 - 4 K \e^{z / \theta}}, \quad  y(z) = z - x(z),
\]
where time is implicit and can be recovered from $z$ by the equation $\theta \sqrt{z^2  - 4K e^{z/\theta}}\dd t = \pm \dd z$ (Cf.\  \cite{SteinerGander}). We obtain guided proposals for $(\xi^\circ_t, \eta^\circ_t)$ by taking $\tilde{B}\equiv 0$ and $\tilde\beta(t)= (\th-x(t), -\th + y(t))^\T$. These proposals can subsequently be transformed to proposals for $(X^\circ_t, Y^\circ_t)$.
%In figure \ref{fig:lotka} we show that the proposals visually resemble true bridges reasonably well. 
%\begin{figure}
%\begin{center}
%\includegraphics[width=0.4\textwidth]{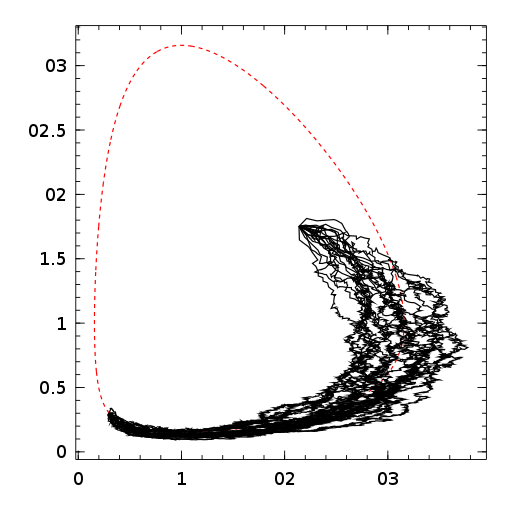}
%\includegraphics[width=0.4\textwidth]{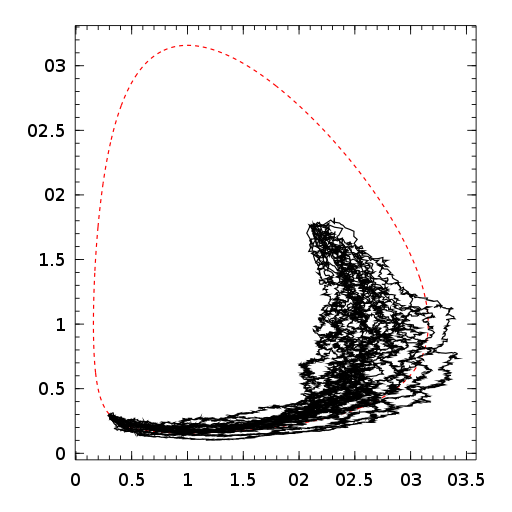}
%\end{center}
%\caption{Sample trajectories in the $(x,y)$-plane from the proposal $(X^\circ, Y^\circ)$ (left) and the true bridge (right) from $(x_0, y_0) = (0.3, 0.3)$ to  $(x_T, y_T) = (2.14, 1.75)$ at $T = 3.65$, $\theta=1$ and $\si= 0.15$. The  red dotted curve shows the periodic trajectory of the deterministic system starting at $(x_0, y_0)$.}
%\label{fig:lotka} 
%\end{figure}

\end{ex}

%%%%%%%%%%%%

\section{Numerical discretisation of guided proposals}\label{sec:tc}

Simulation of $X^\circ$ and numerical evaluation of $ \Psi(X^\circ) := \exp\left( \int_0^T G(s,X^\circ_s) \dd s\right)$  is numerically cumbersome since the drift of $X^\circ$ and the integrand  $G$ explode for $s$ near the endpoint $T$. 
\begin{ex}\label{ex:exploding}
Suppose $\si$ is constant and we take  $\tilde{X}= \si \dd W_t$. Then we have $\tilde{r}(s,x)= \tilde{a}^{-1} (v-x)/(T-s)$, where $\tilde{a}=\sigma \sigma^\T$. Hence the drift of the SDE for $X^\circ$ explodes when $s\uparrow T$.  Furthermore, 
\[	\log \Psi(X^\circ)=\int_0^T  b(s,X^\circ_s)^\T \tilde r(s,X^\circ_s) \dd s = \int_0^T  b(s,X^\circ_s)^\T  \tilde{a}^{-1} \frac{v-X^\circ_s}{T-s} \dd s, \]
which shows the integrand explodes as well. 
\end{ex}
In this section we explain how these numerical problems can be dealt with using a time change and scaling of the proposal process. The purpose is not solely obtaining a more accurate discretisation scheme for the SDE, but above all accurate  evaluation of the integral appearing in $\Psi(X^\circ)$. 

For the particular example just given \cite{Clark}   proposed to perform a time change and scaling of the proposal process to remove the singularities.  Define $\tau^C:[0,\infty) \to [0,T)$ by  $\tau^C(s) = T(1-\e^{-s})$ and  $U^C_s = \e^{s/2}(v -X^\circ_{\tau^C(s)})$. Then $U^C$ satisfies the stochastic differential equation
\begin{align*}
\dd U^C_s =  \;&   - T\e^{-s/2}b(T(1-\e^{-s}), v -\e^{-s/2} U^C_s  ) \dd s -\frac12  U^C_s \dd s -\sqrt{T} \sigma \dd W_s,  
\end{align*}
which behaves like a zero-mean  mean-reverting Ornstein-Uhlenbeck process as $s\to \infty$. Furthermore, 
\[  \log \Psi(X^\circ)= \int_0^\infty   e^{-s/2}b(\tau^C(s),v-e^{-s/2} U^C_s)^\T T \tilde{a}^{-1}   U_s^C  \dd s  
\]
(note that there are some minor typographical errors in \cite{Clark}). Clearly, if $b$ is bounded,  this removes the singularity near $T$, but at the cost of having to deal  with an infinite integration interval. For this reason, we propose a different time-change and scaling.

The time change and scaling due to \cite{Clark} is a special case obtained from considering the process $U_s=m(s)\left(v(\tau(s))-X^\circ_{\tau(s)}\right)$, where $s\mapsto \tau(s)$ is nondecreasing. The choice by \cite{Clark} corresponds to  $\tau(s)=T(1-e^{-s})$ and $m(s)=e^{s/2}$. In the following we denote the time derivatives of $m$ and $\tau$ by $\dot m$ and $\dot \tau$ respectively. 
The time changed process $U=(U_s,\, s\in [0,T))$ satisfies the stochastic differential equation
\begin{multline*}
\dd U_s =  \left(\frac{\dot{m}(s)}{m(s)} U_s  -m(s) \dot\tau(s) b^\circ(\tau(s),v-U_s/m(s))\right) \dd s \\   -m(s) \sqrt{\dot \tau(s)} \si(\tau(s), v-U_s/m(s)) \dd W_s
\end{multline*}

Using the setting of example \ref{ex:exploding}, we motivate another choice of $\tau$ and $m$ for improving numerical accuracy. For the example,  the drift of $U$ is given by 
\[ -m(s)\dot\tau(s) b(\tau(s), v-U_s/m(s)) + \left(\frac{\dot m(s)}{m(s)} - \frac{\dot \tau(s)}{T-\tau(s)} \right) U_s \]
and  $\log \Psi(X^\circ)$ can be expressed in terms of $U$ as follows
\begin{equation}\label{eq:integextc}  \int_{\tau(0)}^{\tau(T)} b(\tau(s),v-U_s/m(s)) \tilde{a}^{-1} \frac{U_s}{m(s)} \frac{\dot \tau(s)}{T-\tau(s)} \dd s . \end{equation}
 As shown in \cite{Schauer}, up to a logarithmic term, $v-X^\circ_s \sim \sqrt{T-s}$ for $s$ close to $T$. Therefore,
$U_s \sim m(s) \sqrt{T-\tau(s)}$ which implies that the possibly exploding part of the integral in \eqref{eq:integextc} satisfies
\[    \frac{U_s}{m(s)} \frac{\dot \tau(s)}{T-\tau(s)} \sim \frac{\dot\tau(s)}{\sqrt{T-\tau(s)}}. \]
To make this constant, we take $\tau(s)=s(2-s/T)$. 
Furthermore, we choose $m(s)=1/(T-s)$ (see section \ref{sec:motivation-scaling} for a justification). 
With these choice of $\tau$ and $m$, $U$ satisfies the SDE
\[ \dd U_s =-\frac2{T} b(\tau(s),v-(T-s) U_s) \dd s - \frac1{T-s} U_s\dd s - \sqrt{\frac{2}{T}} \frac1{\sqrt{T-s}} \si \dd W_s,\quad U_0=\frac{v-u}{T}. \]
Compared to the original SDE for $X^\circ$, we see that an additional exploding factor appears in the diffusion coefficient. At first sight, this may seem like we have worsened the numerical problems. Note however that the integral we wish to evaluate ($\log \Psi(X^\circ)$) behaves much better now. For $s\approx T$, the process $U$ behaves like a mean-zero stationary Ornstein-Uhlenbeck process, with balanced increased mean-reversion and diffusivity. The process $U^C$ proposed by \cite{Clark} behaves like an Ornstein-Uhlenbeck process for large times as well, and we see that with our choice of $\tau$ we speed up time to run through this process much faster, preventing us from evaluating an integral over an unbounded integration region. 

%%%%
\subsection{Time changing and scaling of linear guided proposals}

Based on the motivational derivations of the preceding section, we define a convenient time change and scaling in this section.  
To do this, we need a few more results from \cite{Schauer}. If $\tilde{X}$ is a linear process (satisfying equation \eqref{linsdehomog}), then
\begin{equation}\label{eq:rel-r-H}
\tilde r(s,x) = \tilde H(s) (v(s)-x),
\end{equation}
where
\begin{equation}\label{def_v}v(s) =  F(s,T) v - \int_s^T F(s,z) \tilde \beta(z) \dd z
\end{equation}
($\tilde r$ and $\tilde H$ are defined in equation \eqref{eq:notation}). 
Here  $F(t,s)=F(t)F(s)^{-1}$ with $F(t)$   the fundamental $d\times d$ matrix that satisfies
\begin{equation}\label{def:F} F(t) = \I + \int_0^t \tilde B(z) F(z) \dd z. \end{equation}
Define the process $U$ by 
\begin{equation}\label{U}
	U_s :=\frac{v(\tau(s))-X^\circ_{\tau(s)}}{T-s}.
\end{equation}
This implies
\begin{equation}\label{eq:Xcirc-back}
 X^\circ_{\tau(s)} = 	 v(\tau(s))- (T-s) U_s=: \Gamma(s, U_s).
\end{equation}

\begin{lemma}\label{lem:tc}
The time changed process $U=(U_s,\, s\in [0,T))$ satisfies the stochastic differential equation
\begin{equation}\label{Usde}
\begin{aligned}
\dd U_s = &  \frac2T\dot v(\tau(s))\dd s  - \frac{2}{T} b(\tau(s), \Gamma(s, U_s)) \dd s  \\&+ \frac{1}{T-s}\Big(\I - 2a(\tau(s), \Gamma(s, U_s))J(s)\Big)  U_s  \dd s \\
& - \sqrt{\frac{2}{T}} \frac1{\sqrt{T - s}} \sigma(\tau(s),\Gamma(s, U_s) ) \dd W_s, \qquad U_0=\frac{v-u}{T}
\end{aligned}
\end{equation}
where $W$ is a Brownian motion and $J$ defined by
\begin{equation}\label{eq:defJ}
J(s) = \tilde H(\tau(s))(T-\tau(s))
\end{equation}
satisfies  $ \lim_{s \uparrow T} \tilde a(s) J(s) = \I$. 
Moreover,
 \begin{align*}\int_0^T G(s,X^\circ_s) \dd s &= 2\int_{ 0}^{ T}  (b - \tilde b)^\T (\tau(s),\Gamma(s, U_s))  J(s) U_s  \dd s \\ - &
  \int_{ 0}^{T} \trace\left[\frac{( a - \tilde a)(\tau(s),\Gamma(s, U_s))}{T-s} J(s)\left(\I-T\, U_s U_s^\T J(s)\right)  \right]\dd s.
\end{align*}
\end{lemma} 

If we simulate $U$ on an equidistant grid we can recover $X^\circ$ on a non-equidistant grid from equation \eqref{eq:Xcirc-back}. This implies $X^\circ$ is evaluated on an increasingly finer grid as $s$ increases to $T$. 
In our implementation, all computations are done in time-changed/scaled domain, and the mapping $g$ is in fact defined  by setting $U = g(\th, Z^\circ)$, where $Z^\circ$ is the driving Brownian Motion for $U$.

\subsection{Numerical illustrations}

In this section we present results of simulations to assess the decrease in discretisation error using the proposed time change and scaling. As a comparison, we also consider various alternative discretisation schemes. In all cases, we use the equidistant grid by imputing $m-1$ points on $[0,T]$ for discretisation. Define $h=T/m$ and set  $t_j=jh$, $j=0,\ldots, m$. The alternatives we consider are:
\begin{enumerate}
\item Euler discretisation of the SDE for $X^\circ$.
\item The Modified Diffusion Bridge (MDB) discretisation  introduced in \cite{DurhamGallant}. This discretisation is obtained by  applying Euler discretisation to the SDE for $X^\circ$ and adding a correction term to the diffusion coefficient. This gives the scheme $\{\vt{X}^\circ_{t_j}\}$ where 
\begin{equation}\label{eq:mdb}
 \vt{X}^\circ_{t_{j+1}} = \vt{X}^\circ_{t_j} + b^\circ(t_j, \vt{X}^\circ_{t_j}) (t_{j+1}-t_j) + \si(t_j, \vt{X}^\circ_{t_j})  \sqrt{\frac{T-t_{j+1}}{T-t_j}} (W_{t_{j+1}}-W_{t_j}) 
\end{equation}
\item Euler discretisation of the SDE for the time-changed process using $\tau$, but {\it without} the scaling. This means that we apply Euler discretisation to the SDE 
\[ \dd V_s = b^\circ(\tau(s),V_s) \dot\tau(s) \dd s + \sqrt{\dot\tau(s)} \si(\tau(s),V_s) \dd W_s,\qquad V_0=u \]
where $V_s=X^\circ_{\tau(s)}$.
\end{enumerate}
The first two of these schemes have gained quite some popularity in the literature, the third one is included to assess the effect of including a scaling. 

Within the simulation study, we considered $b\equiv 0$  and  $b(x)=-\arctan(x)$. 
We considered two types of guided proposals $X^\circ$:
\begin{enumerate}
\item proposals generated by choosing $\dd \tilde{X}_t = \si(T,v) \dd W_t$ which gives pulling term \[  \tilde{r}(t,x)=\si(T,v)^{-2} (v-x)/(T-t) \tag{\text{BM-pull}} \]
\item proposals generated by choosing $\dd \tilde{X}_t =-\beta \tilde{X}_t \dd t + \si(T,v) \dd W_t$, which gives pulling term
\[	\tilde{r}(t,x)= \frac{2 \beta}{\si^2(T,v)} e^{-\beta(T-t)}\frac{v-xe^{-\beta(T-t)}}{1-e^{-2\beta(T-t)}}.\tag{\text{OU-pull}}\] 
\end{enumerate}
In the simulation study we are  interested in accurate discretisation of the likelihood given in equation \eqref{likeli} which appears in the acceptance probabilities of the algorithms of Section \ref{sec:algo} (whether the considered pulling terms are good choices is of minor importance for that purpose). More precisely, we evaluate the discretisation of the path-integrals 
 $I(X^\circ)= \int_0^T G(s,X^\circ_s) \dd s$ in case of discretisation of the SDE for $X^\circ$, $I(V)=\int_0^T G(\tau(s), V_s) \dot\tau(s) \dd s$ in case of discretisation of the SDE for $V$ and $I(U)=\int_0^T G(\tau(s), v(\tau(s)-(T-s) U_s) \dot\tau(s) \dd s$ in case of discretisation of the SDE for $U$. 
At the finest discretisation level, we divide $[0,T]$ into  $2^L$ intervals of equal length. 
If $h=T/2^L$, then $t_j=jh$, $j=0,\ldots, 2^L$. We start by simulating on the finest grid. Next we redo the simulation on the grid of length $2^{L-1}$ {\it using the same Wiener process increments}. This can be continued iteratively (until there are only 2 intervals of equal length). 
The simulation study was run as follows:
\begin{compactenum}
\item Generate the sequence $\{t_j\}$ with $h=T/2^L$.
\item Generate Wiener increments on the generated grid.
\item Simulate a realisation $X^{\circ,L}$  of the diffusion bridge with the generated Wiener increments. Compute and store  $I(X^{\circ,L})$. 
\item for k=L-1 downto 2
\begin{compactitem}
\item Coarsen the grid by removing the 2nd, 4th, 6th, etc point from the grid and aggregate the Wiener-increments. Simulate a realisation $X^{\circ,k}$ of the diffusion bridge with these Wiener increments.
\item Compute the error $e_k=I(X^{\circ,k})-I(X^{\circ,L})$. 
\end{compactitem}
\item Repeat steps 1 up till 4 $B$ times and compute  the Root Mean Squared Error of all errors using an equal number of grid-points. 
\end{compactenum}
We chose $L$ sufficiently large such that the approximation for $I$ is virtually the same for all discretisation methods. 
As quadrature rule we used  the midpoint rule, where the integrand is evaluated at the left-point. 
\begin{figure}
\begin{center}
\includegraphics[scale=.9]{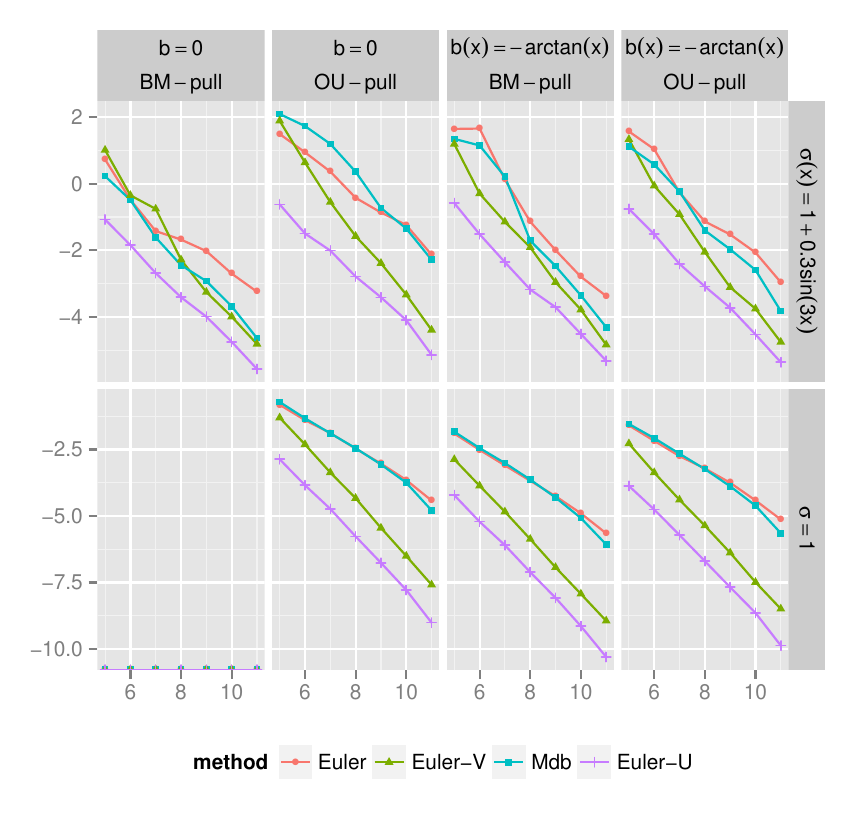}
\caption{$\log_2(RMSE)$ of $I(\cdot)$ versus $\log_2(\text{nr of segments})$. } %
\label{fig:discretisation1}
\end{center}
\end{figure}

In the simulations, we simulated bridges starting in $u=0$ at time $0$ and ending in $v=3$ at time $T=1$.
The results of the simulations are in figure \ref{fig:discretisation1} where we considered both $\si=1$ and $\si(x)=1+0.3\sin(3x)$. By definition, there is no error in the lower-left panel.  From the simulation results we see that for various combined choices of drift, diffusion coefficient and pulling term,  our approach performs best. We have run simulations with other values for $v$, $T$, $b$ and $\si$ leading qualitatively to the same conclusion.

 The beneficial effect of the time change and scaling is further illustrated in the examples of Sections \ref{sec:onedim-example} and  \ref{sec:fitz}.

\subsection{Order of convergence}

Ideally, one would derive a result on the order of convergence of each of the discussed discretisation methods for approximating $I$. We feel that this is outside the scope of this paper. As noted in \cite{PapaRobertsStramer} (page 676): ``Quantitative results on the relative efficiency of discretisation schemes are scarce in the literature.'' In case $b=0$ and $\si$ is constant (which is the simplest case to consider), \cite{PapaRobertsStramer} show that the strong order of convergence of the Euler scheme is $O(\sqrt{\delta})$ at $T-\delta$. This shows that the usual higher $O(\delta)$ strong order (which holds for diffusions with additive noise) is lost due to the exploding behaviour of the drift. 
Along similar lines as in \cite{PapaRobertsStramer}   one can prove that the strong order $\delta$ is maintained if the time-change and scaling is used. Admittedly, this is a rather weak result since {\it (i)} the case $b=0$ and $\si$ constant is very specific, {\it (ii)} the focus is on accurate evaluation of a path integral of the proposal bridge and not solely the process at specified points. The concept of strong order is not really needed here: we are interested in almost sure convergence of Euler approximation pathwise.   Under local Lipschitz conditions on the drift and diffusion coefficients, the pathwise convergence rate of the Euler scheme coincides up to an arbitrarily small $\eps>0$ with its strong convergence rate $1/2$ (Cf.\ \cite{Gyongy}). We expect the same pathwise convergence rate to hold for the integrals, when approximated using the proposed time-change and scaling.  In this sense, it is not unexpected that the lower panel in figure \ref{fig:discretisation1} shows lines with  slopes close to either $1/2$ (Euler, Mdb) or $1$ (Euler-V, Euler-U).

\subsection{Motivation for the scaling}\label{sec:motivation-scaling}
Consider the SDE for $\tilde{X}$ as defined in equation \eqref{linsdehomog}. The corresponding fundamental matrix is given in equation \eqref{def:F}. 
Define the process $\tilde{X}^\star$ as the process $\tilde{X}$, conditioned on $\tilde X_T = v$. Then $\tilde{X}^\star$ is a linear process itself with drift  $\tilde b^\star(t,x) = \tilde{B}(t) x + \tilde\beta(t) + \tilde{a}(t) \tilde{H}(t)  (v(t) -x)$ and diffusion coefficient $\tilde{\si}^\star(t)=\si(t)$.  
Denote the corresponding fundamental matrix  by $F^\star$. Hence $F^\star$ satisfies
\[
\frac{\dd }{\dd t} {F^\star}(t) =  \left( \tilde{B}(t)  - \tilde a(t) \tilde H(t)\right) F^\star(t),\qquad F^\star(0)=\I.
\]

\begin{thm}\label{thm:tcou}
Fix a nondecreasing differentiable mapping $\tau\,:\,[0,T] \to [0,\infty)$. 
If we define the scaling matrix $m$ by 
$ m(s) =  (T-s){F^\star}(\tau(s))^{-1}$. 
then the process $U$ defined by 
\[
U_s =   m(s) \left[v(\tau(s)) -  X^\circ_{\tau(s)} \right]
\]
(with $s\mapsto v(s)$ as defined in equation \eqref{def_v}) satisfies the SDE
\begin{multline*}
\dd U_s = \left( -\frac{U_s}{T-s} - m\dot\tau\left[b(\tau, \Gamma)-\tilde b(\tau, \Gamma) \right.\right.\\ \left.\left.   + ( a(\tau, \Gamma)-\tilde{a}(\tau)) \tilde H(\tau) m^{-1} U_s\right] \right)\dd s - m\sqrt{\dot\tau} \si(\tau, \Gamma)\dd W_s,
\end{multline*}
where   $\Gamma\equiv \Gamma(s,U_s)=v(\tau(s))-m(s)^{-1} U_s$. To lighten the notation we have written $\tau$, $\dot\tau$ and $m$ to denote $\tau(s)$, $\dot\tau(s)$ and $m(s)$ respectively.
\end{thm}
The proof is deferred to the appendix (section \ref{app:proofs}). 
\begin{cll}\label{cor:euler-good}
Let $\bar U_{t_i}$ denote the Euler approximation at time $t_i$ of $U$. 
If $a(t,x) \equiv a(T,v)=\tilde{a}$ and $b(t,x)=\tilde{b}(t,x) = \tilde{B}(t) x + \tilde\beta(t)$, then 
\[
\expec{U_{t_i} \mid U_{t_{i-1}} = u} = \expec{\bar U_{t_i} \mid \bar U_{t_{i-1}} = u}. 
\]
\end{cll}
\begin{proof}
In this case 
\[	\dd U_s = -\frac{U_s}{T-s}\dd s - m\sqrt{\dot\tau} \si(\tau, \Gamma)\dd W_s.
\]
Hence 
\[	\expec{U_{t_i} \mid U_{t_{i-1}} = u}=\frac{T-t_i}{T-t_{i-1}} u . \]
 It is easy to see that this coincides  with $\expec{\bar U_{t_i} \mid \bar U_{t_{i-1}} = u} $.
\end{proof} 

This shows that if we use linear guided proposals and use the scaling matrix $m$ defined in theorem \ref{thm:tcou}, then the Euler approximation of the process $U$ has the correct conditional expectation when $X$ itself is a linear process. Note that this is not necessarily the case without applying the scaling.

In case $\tilde\beta=0$, $\tilde{B}=0$ and $\tilde\si(t)=\tilde\si$, we have $F^\star(t)=\I/(T-t)$ and $m(s)=1/(T(T-s)) \I$. This means that we should have $m(s)=O(1/(T-s))$ for $s\approx T$.

%%%%%%%%%%
\section{Computational costs and implementation}\label{sec:impl}

In this section we discuss the computational cost of using guided proposals. For comparison, we add the computational cost of Delyon-Hu type proposals. Here we only consider the cost of imputation by diffusion bridges (including the computation of their acceptance probabilities). 
 Let
\begin{itemize}
\item $K$ denote the number of iterations of the data-augmentation algorithm;
\item $M$ denote the number of segments for imputations (so $M+1$ is the number of discrete-time observations);
\item $N$ denote the number of Euler-step applied to each segment.
\end{itemize}
The computational costs of simulating proposals are summarised in table \ref{table:comp}. We give some elucidation on this table. 
\begin{enumerate}

\item Applying guided proposals with $\tilde{B}\equiv 0$ gives minor additional computations compared to Delyon-Hu type proposals. One merely needs to compute $\int_t^T \beta_\th(s) \dd s$ on the whole augmented grid during all simulations. If $\beta$ does not depend on $\th$ this computation  needs to be carried out only once on the whole grid. 

\item If $\tilde{B}\not\equiv 0$, then simulation of  $U$ as defined in equation \eqref{Usde} requires evaluation of both $\dot{v}$ and $J$, where $v$ and $J$ are defined in equations \eqref{def_v} and \eqref{eq:defJ} respectively.  As $\dot{v}(s)=\tilde{B}(s) v(s) +\tilde\beta(s)$, evaluating $\dot{v}$ requires evaluation of $v$. This in turn requires evaluation of matrix exponentials. For evaluating $J$, we first compute  $\lambda$ as the solution to continuous time Lyapunov equation $B \lambda + \lambda  B^\T = -\tilde{a}$.  Using $\la$ we can evaluate $J$ using \[
J(s)  = z(s) \left(e^{- \tilde{B}_\th z(s)} \lambda e^{-\tilde{B}_\th^\T z(s)} -\lambda  \right)^{-1},
\]
where $z(s)=T-\tau(s)$. These functions  need to be computed  on the whole augmented  grid  in each iteration. In case $\tilde{B}$ does not depend on $\th$, both $J$ and $\dot{v}$ can be precomputed on a grid in advance to the MCMC-algorithm, preventing multiple expensive matrix exponential computations. 

\end{enumerate}

Besides simulation of the proposals, an acceptance probability needs to be computed. This requires evaluation of certain integrals of the proposal. A potential disadvantage of Delyon-Hu type proposals is that inverses appear. Moreover, stochastic integrals need to be approximated.

{
\newcommand{\smallcdot}{}
\begin{table}
\label{table:comp}
\centering \rara{1.3}
	\small\begin{tabular}{lccc}\toprule 
		&  $\int_{t}^{T} \beta_\th \dd s$ & $\exp\left(s B\right)$ & $B \lambda + \lambda  B^\T = \tilde{a}$ \\
		\cmidrule{1-4}
		Delyon-Hu  &  0 & 0 & 0\\
		$\tilde b(t,x) = \beta_\theta(t)$ & $N\smallcdot M\smallcdot K$ & 0 & 0\\
		$\tilde b(t,x) = B x + \beta_\theta(t)$  &$N\smallcdot M \smallcdot K $& $N$ & $M$\\ 
		$\tilde b(t,x) = B_\theta x + \beta_\theta(t)$ &$N\smallcdot M \smallcdot K $& $N\smallcdot K$ & $M  \smallcdot K$\\ 
	\end{tabular}
\caption{Overview of computational cost for simulating proposals.}
\end{table}	
}

%%%%%%%%%%
\section{Examples}\label{sec:examples}
The source code of the examples is available online.\footnote{See {\tt https://github.com/mschauer/BayesEstDiffusion.jl}.} It is  written in the programming language Julia (\cite{julia}).

\subsection{Example for one-dimensional diffusion}\label{sec:onedim-example}
In this section we discuss example \ref{ex:arctan}. The goal is twofold: {\it (i)} to show that the proposed algorithm does not deteriorate when  increasing the number of imputed points, {\it (ii)}: to show that the discretisation scheme of section \ref{sec:tc}  reduces discretisation error. 

We take the  diffusion process with dynamics of \eqref{eq:sde-arctan}.
Assume that we observe $X$  at times points $t=0, 0.3, 0.6\dots, T=30$ and wish to estimate $(\alpha, \beta, \sigma)$. As true values we took $\alpha = -2, \beta = 0$ and $\sigma = 0.75$. For generating the discrete time data  we simulated the process on $[0,T]$ at $400\,001$ equidistant time points using the Euler scheme and take a subsample. 

For  $\alpha$ and $\beta$ we chose apriori independently a $\scr N(0, \xi^2)$-distribution with variance $\xi^2 = 5$. For $\log \sigma$ we used an uninformative flat prior. We applied algorithm \ref{alg2} with $\rho=0$ in \eqref{eq:rw-innovation} with random walk proposals for $q(\sigma^\circ \mid \si)$ of the form
$\log\sigma^\circ :=  \log \sigma + u$ with $u \sim \scr U(-0.1, 0.1)$.

We initialised the sampler with $\alpha = -0.1$, $\beta = -0.1$ and $\sigma=2$ and varied the number of imputed points over  $m = 10, 100$ and $1000$. Acceptance rates for proposed bridges were in all cases between $94\%$ and $95\%$ and  for  $\si$ between $72\%$ and $73\%$.

\begin{figure}
\centering
\includegraphics[width=0.45\textwidth]{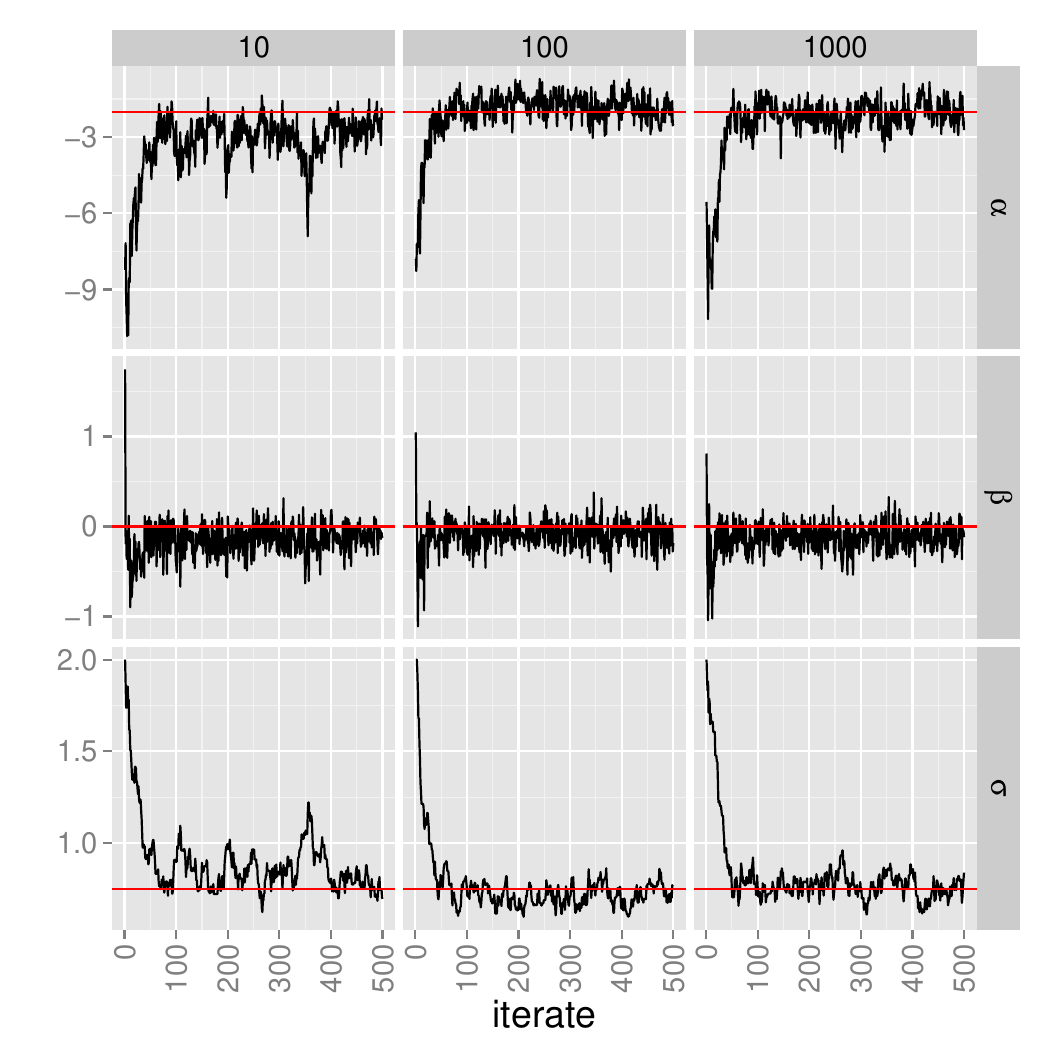}
\includegraphics[width=0.45\textwidth]{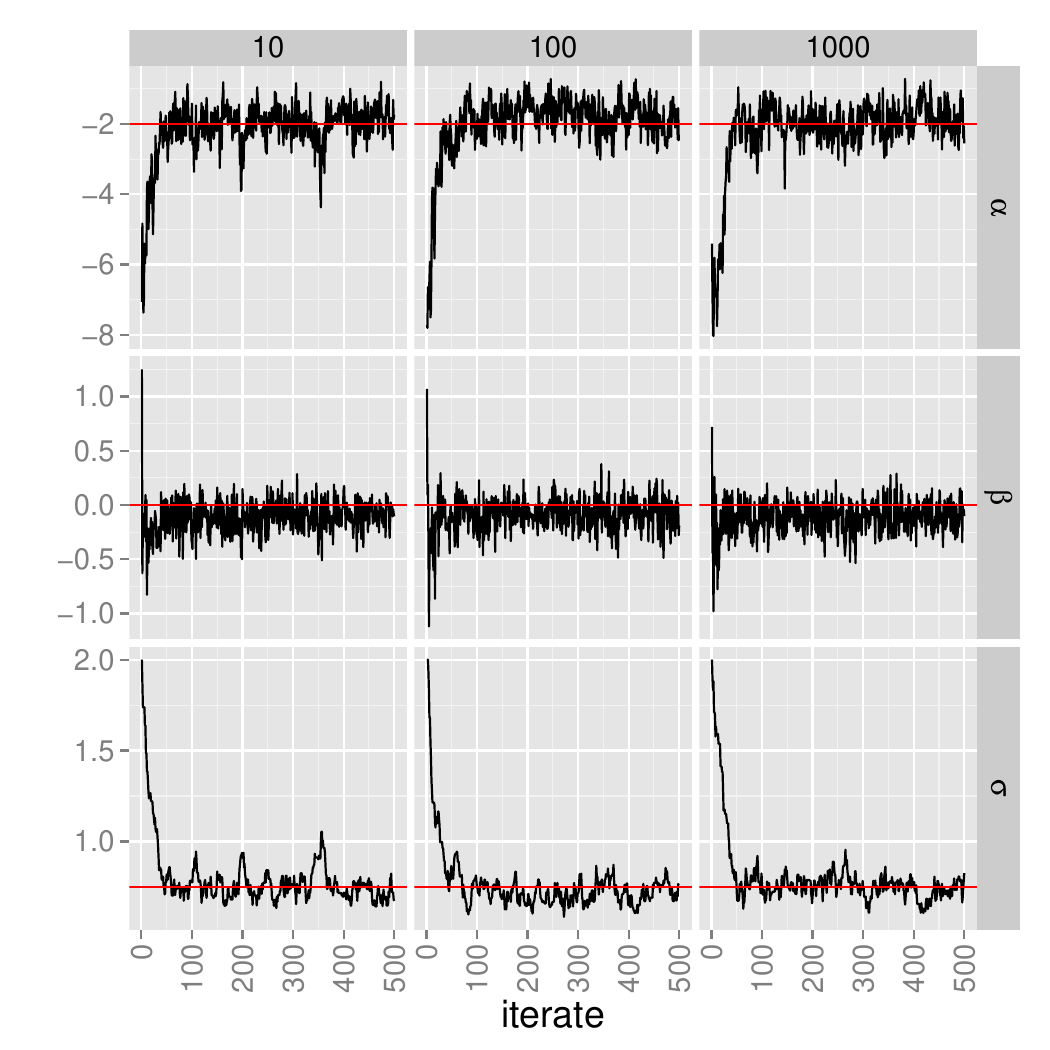}
\includegraphics[width=0.45\textwidth]{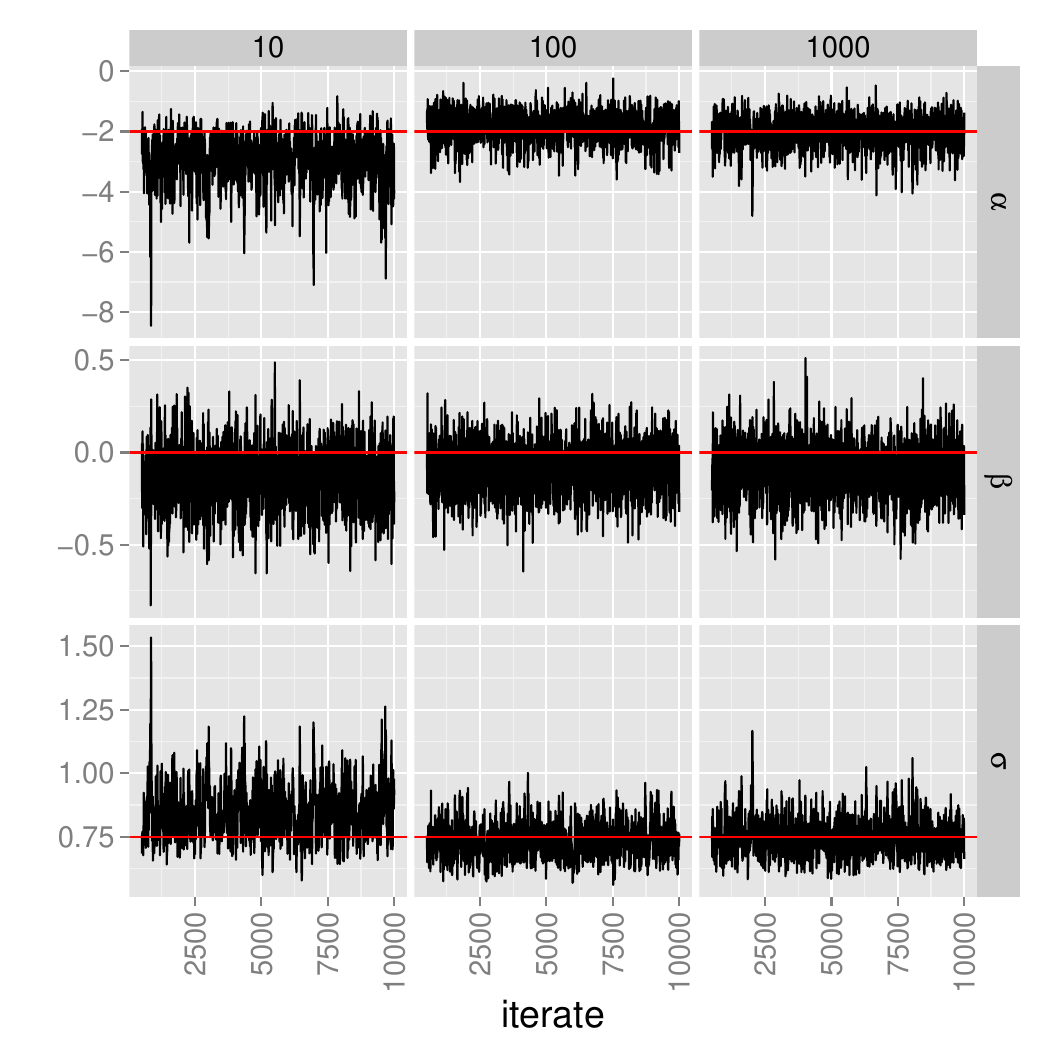}
\includegraphics[width=0.45\textwidth]{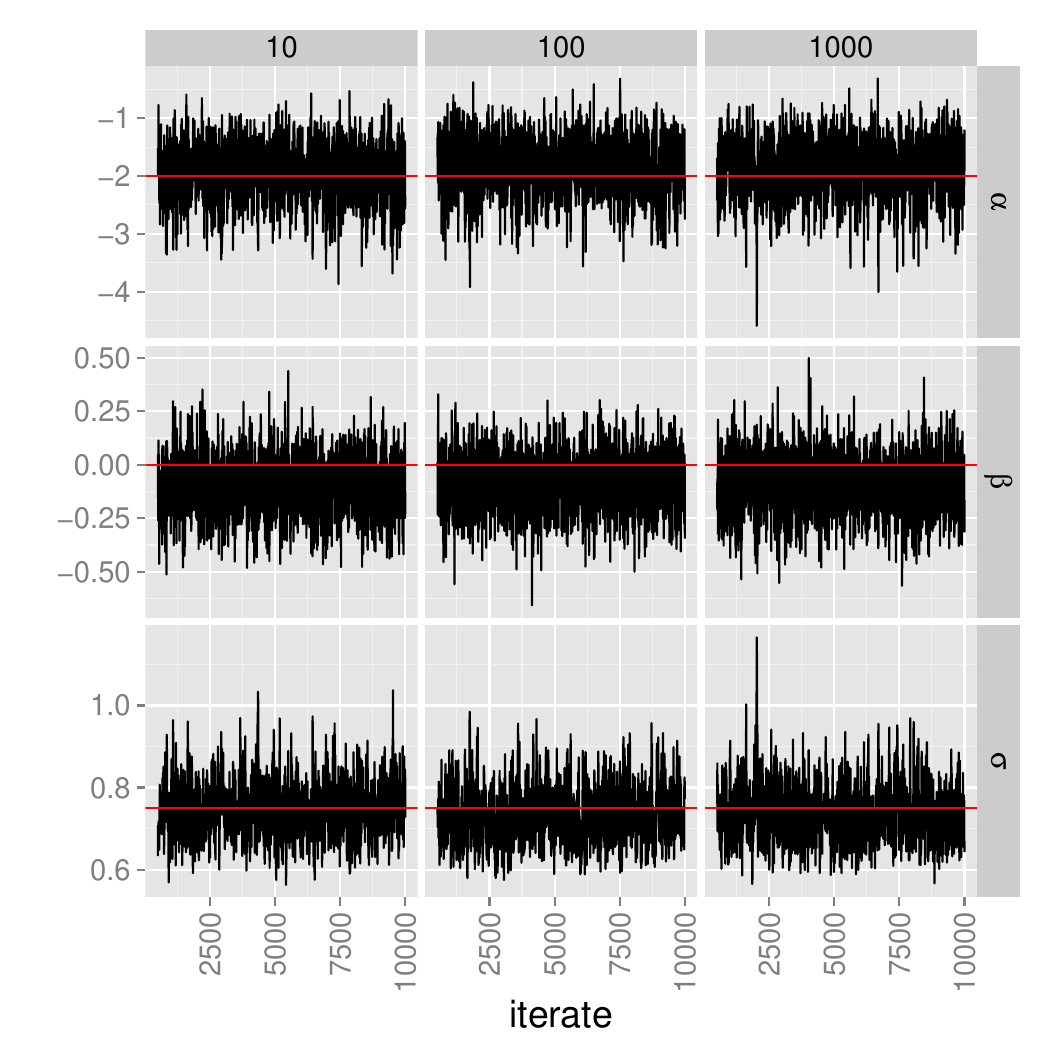}
\includegraphics[width=0.45\textwidth]{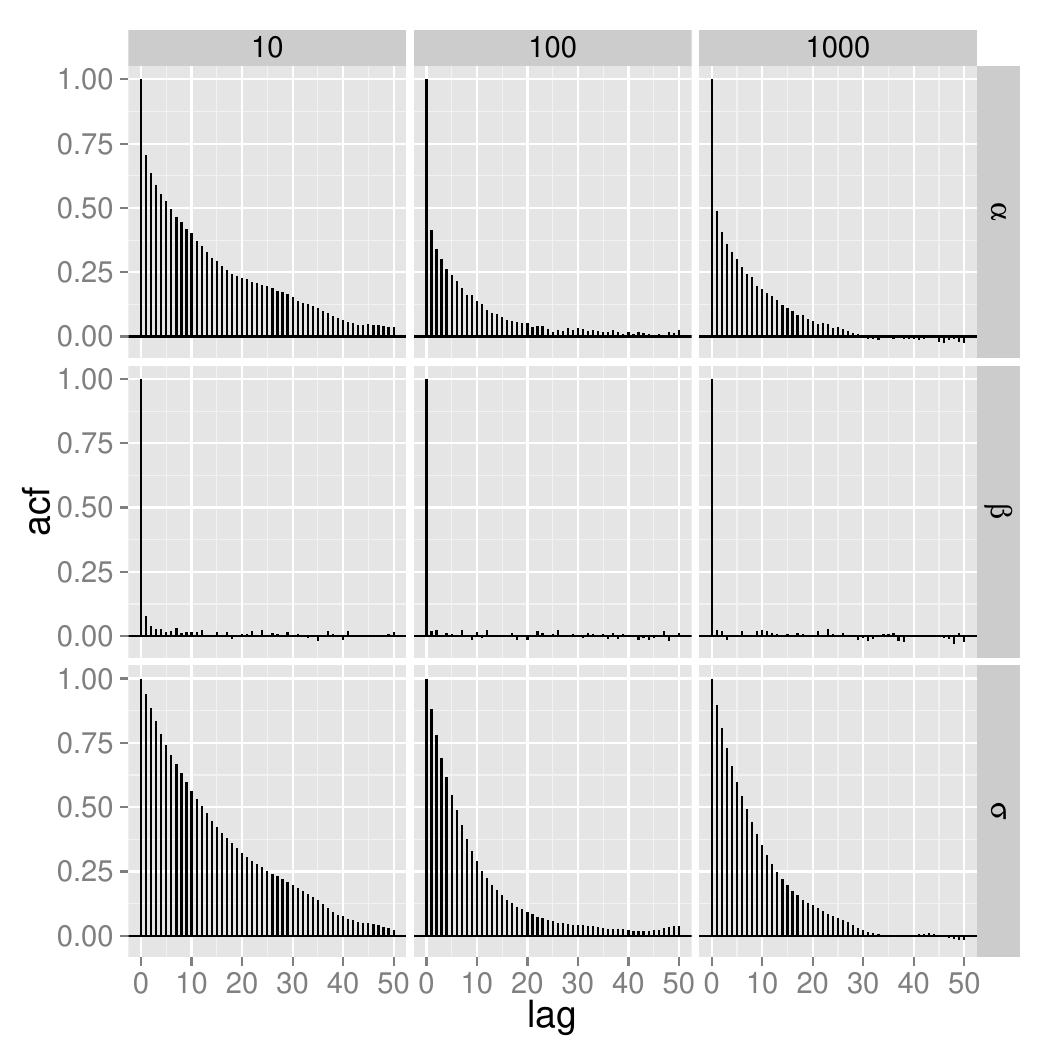}
\includegraphics[width=0.45\textwidth]{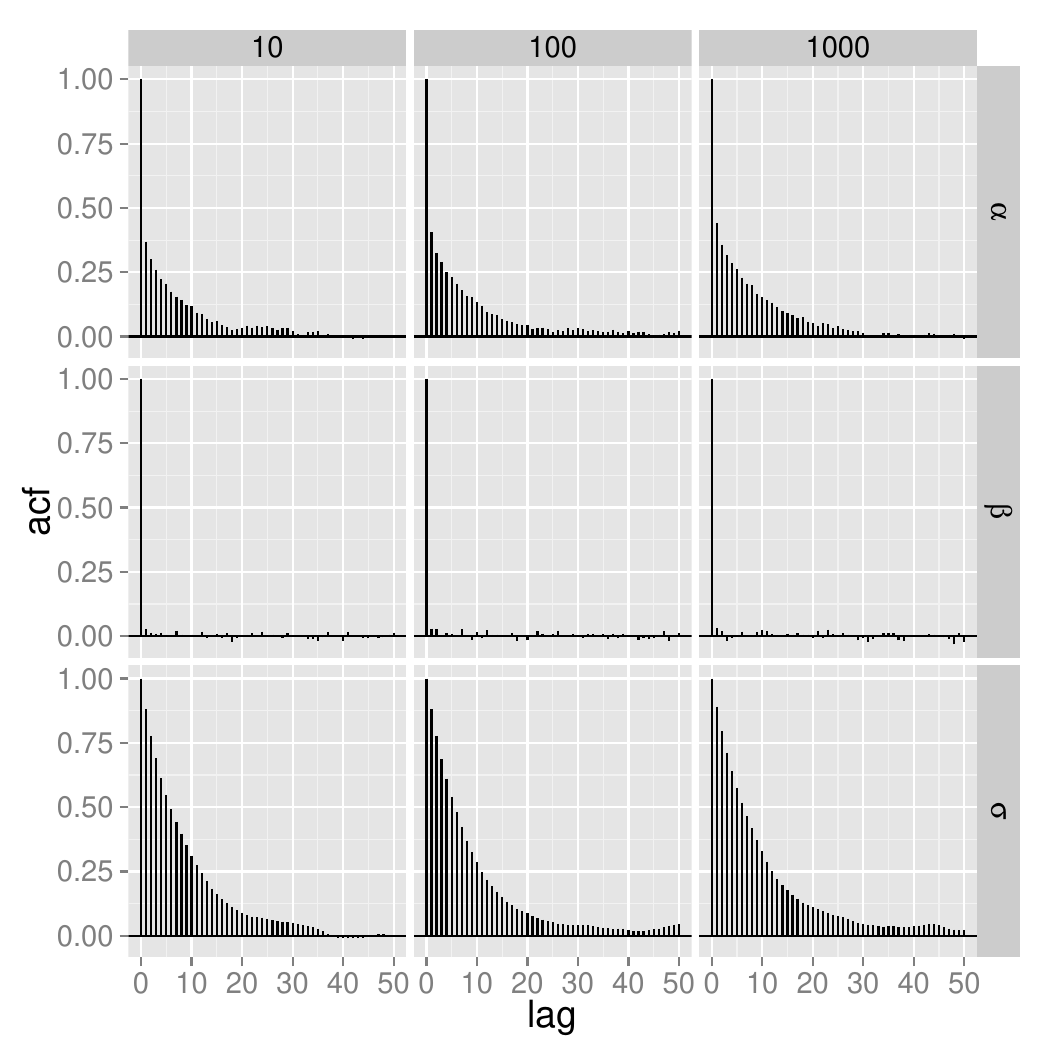}
\caption{Panels comparing different numbers of imputed points ($m = 10, 100, 1000$). Left: without time change. Right: with time change. Top: first 500 iterates. Middle: iterates 501-10.000. Bottom: ACF-plots based on iterates 501-10.000.}
\label{bigpanel1}
\end{figure}

\begin{figure}
\centering
\includegraphics[width=0.3\textwidth]{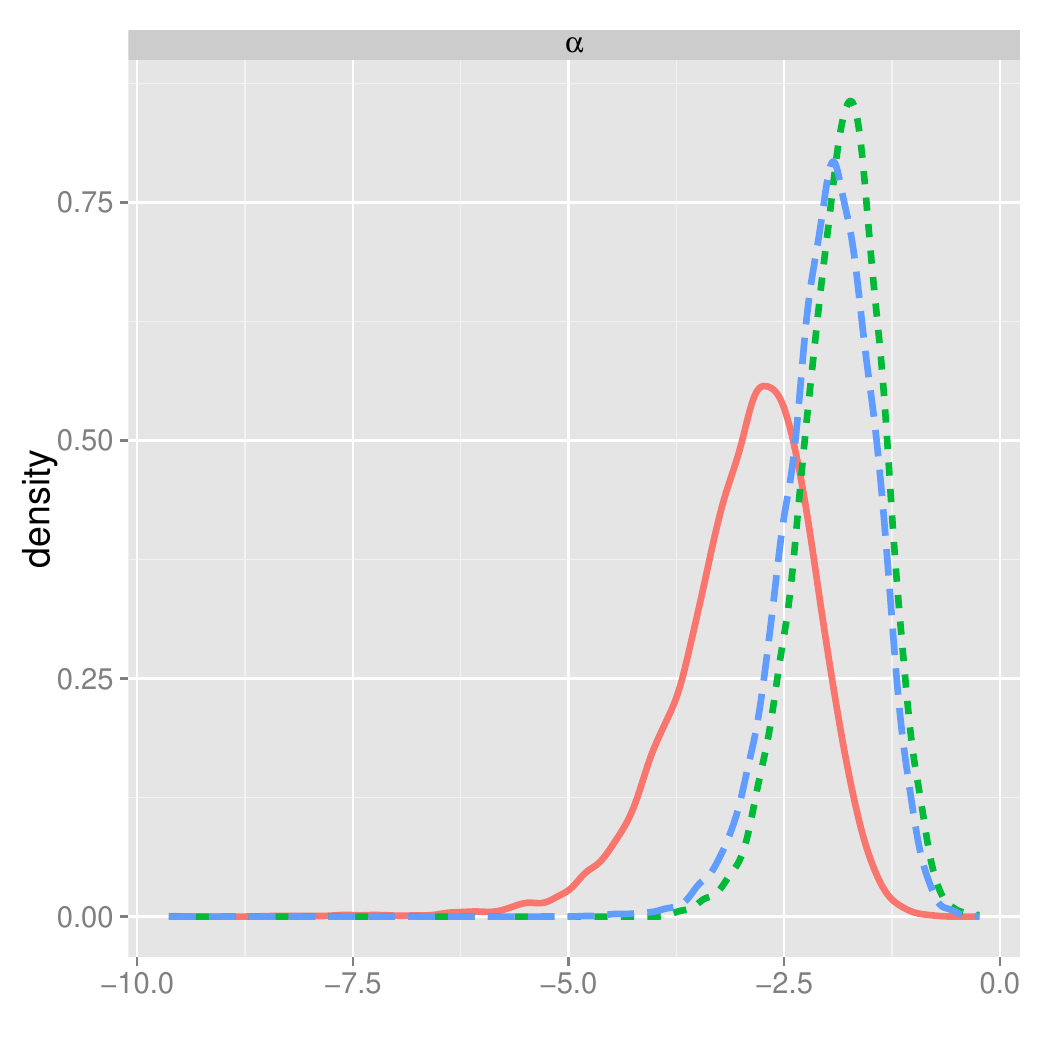}
\includegraphics[width=0.3\textwidth]{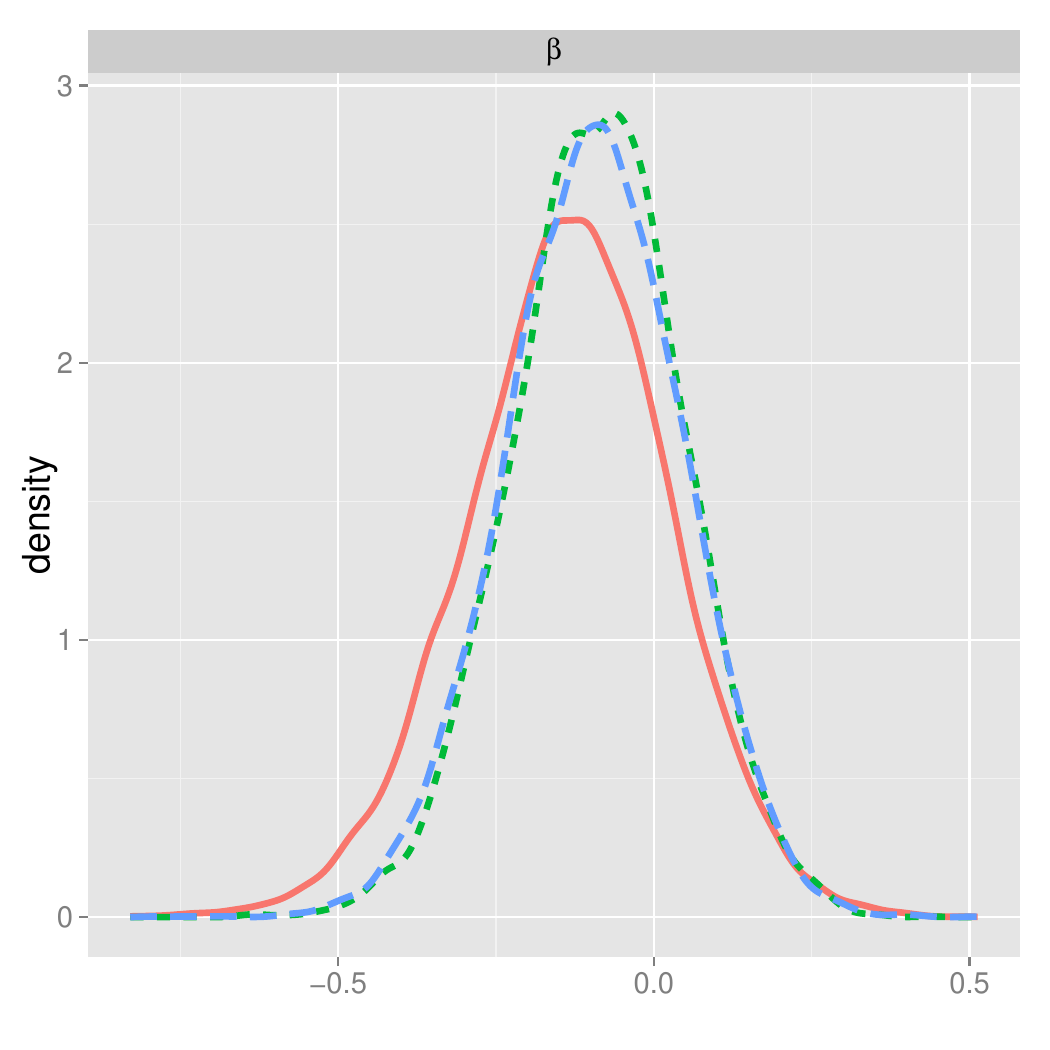}
\includegraphics[width=0.3\textwidth]{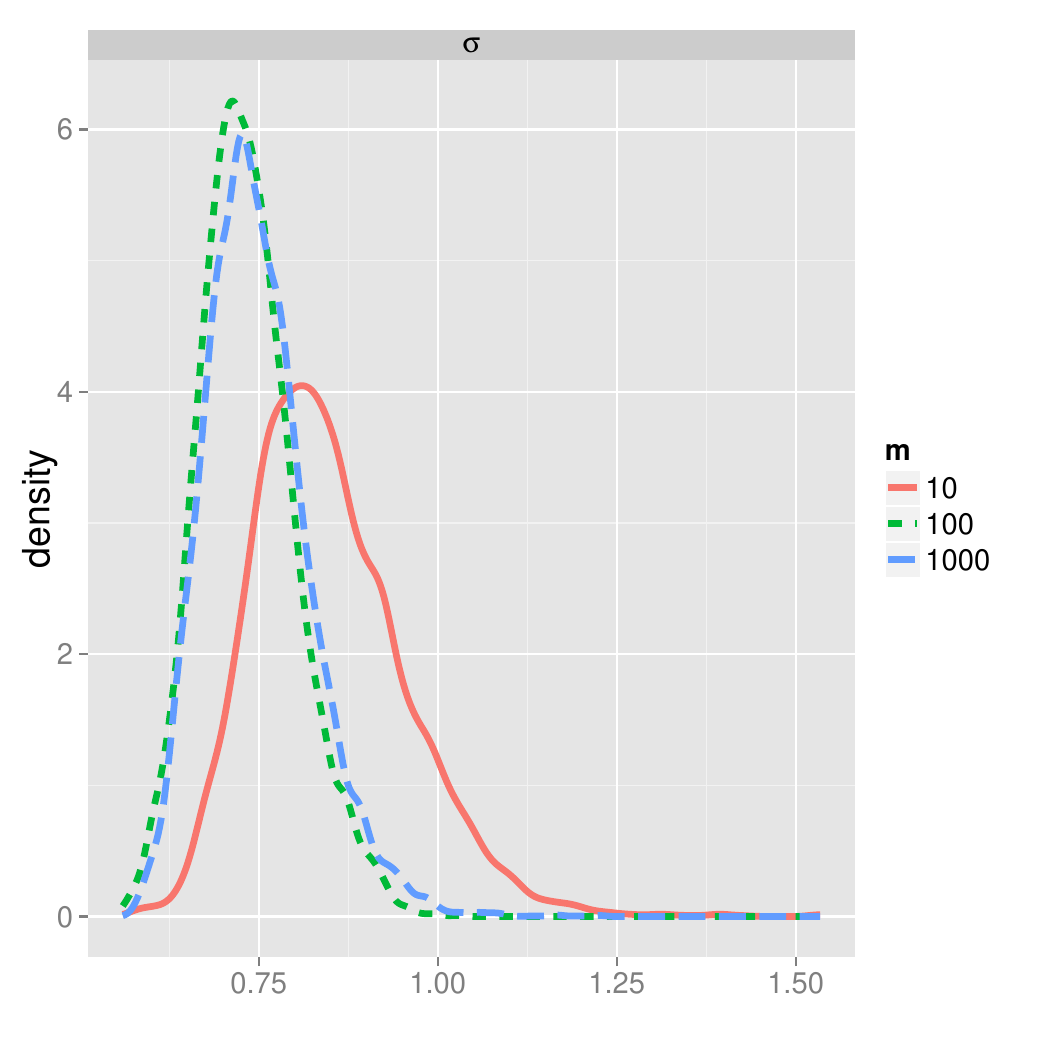}

\includegraphics[width=0.3\textwidth]{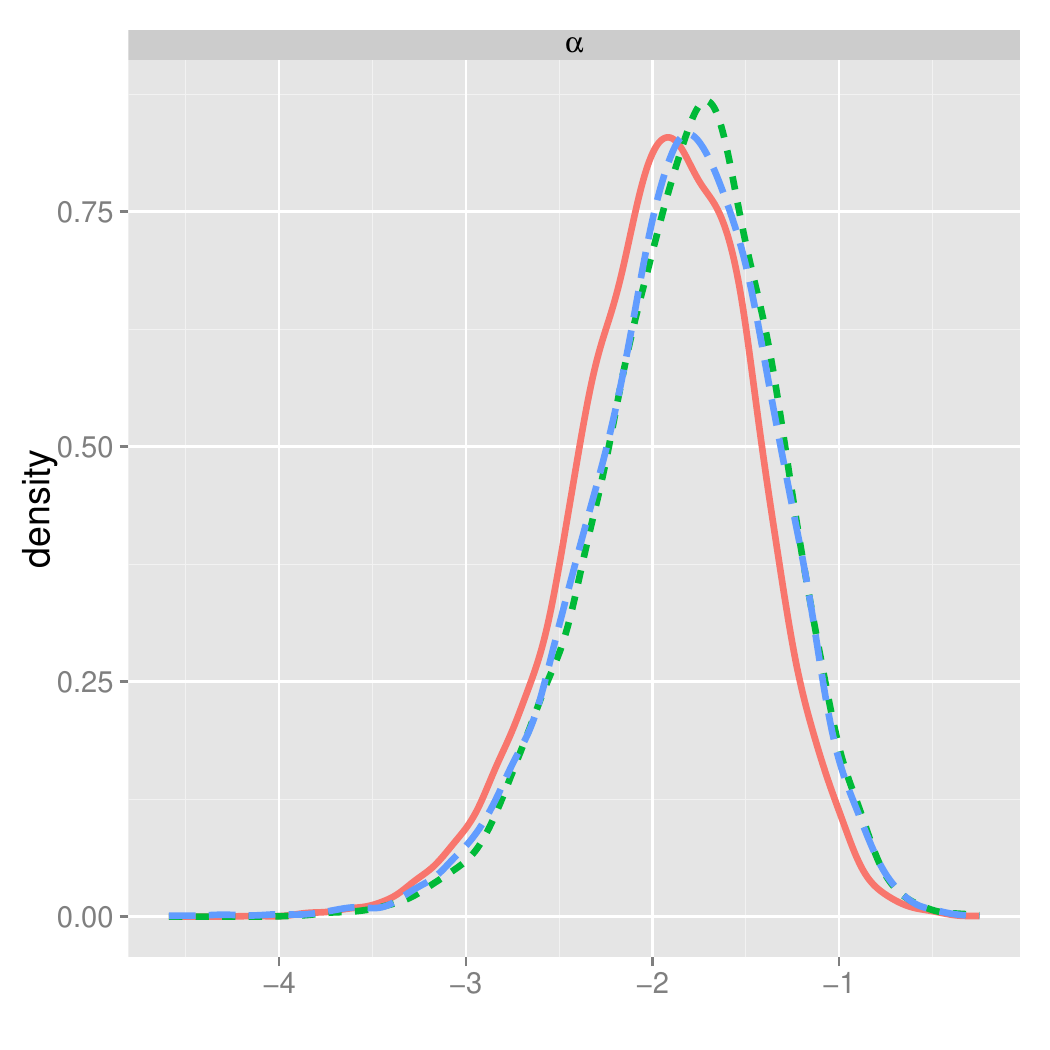}
\includegraphics[width=0.3\textwidth]{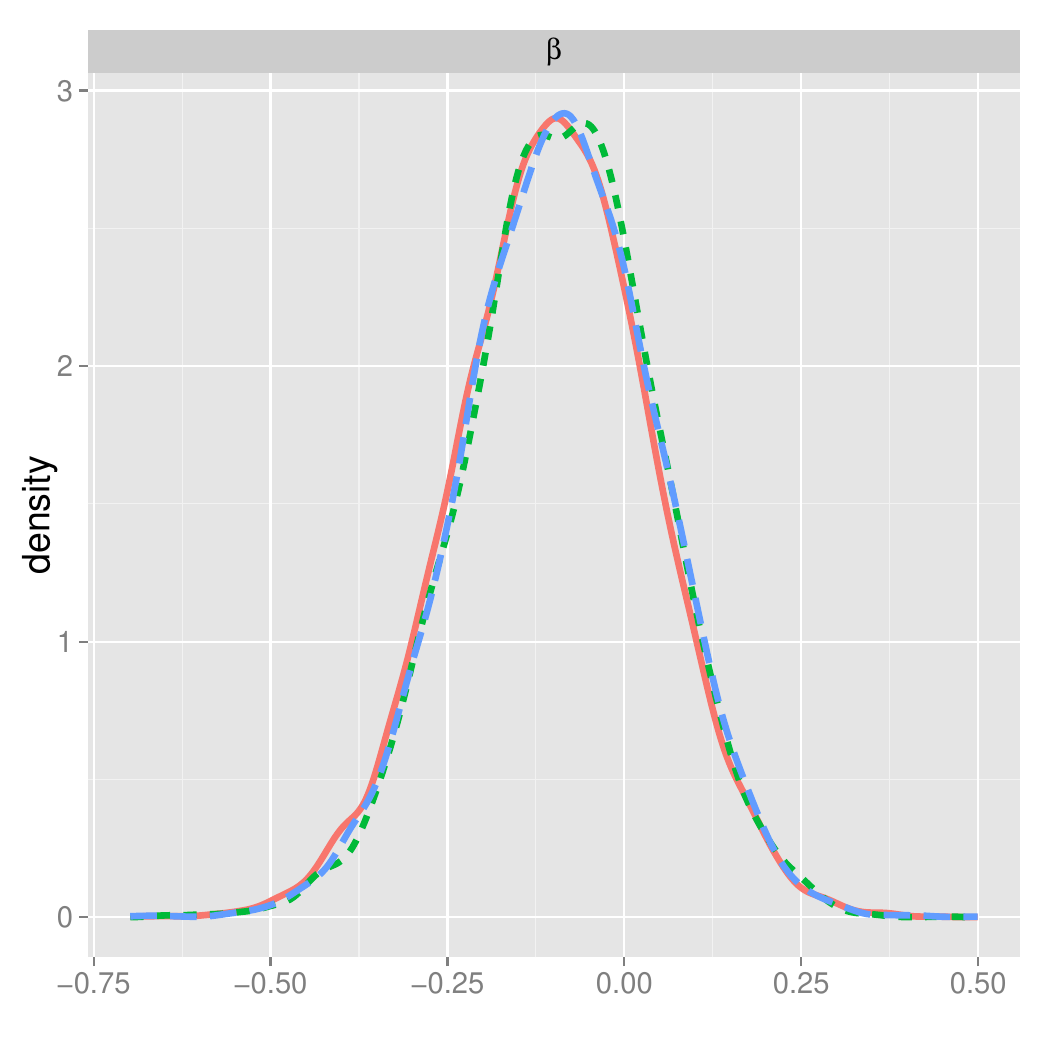}
\includegraphics[width=0.3\textwidth]{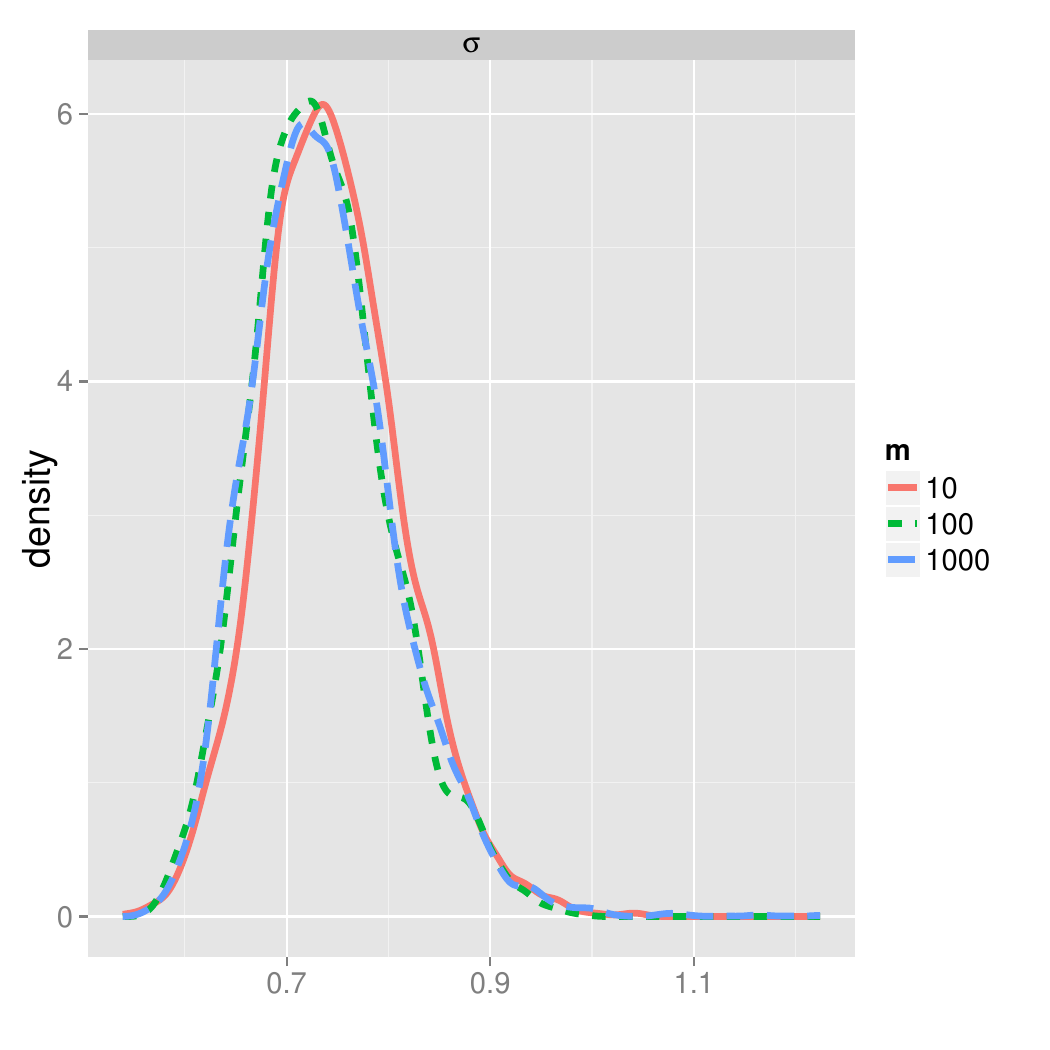}
\caption{Kernel density estimates of the parameters based on iterates 501-10.000. Top: non time-changed. Bottom: time-changed. The bias from using only a small number of imputed points ($m=10$, red curve) is clearly smaller for the time changed process. For a high number of imputed points ($m=1000$, blue dashed curve) both methods agree.}
\label{posterior1}
\end{figure}

Figures \ref{bigpanel1} and \ref{posterior1} illustrate the results of running the MCMC chain for 10.000 iterations using $m = 10, 100, 1000$ imputed points  respectively for each bridge (including endpoints), both with time change and without. Two things stand out: firstly,  increasing the number of imputed points $m$ does not worsen the mixing of the chain and secondly the vastly reduced bias when using discretisation of $U$ (especially when $m$ is small).
%---------------------

%%%%%%%%%%%%%%%%

\subsection{FitzHugh-Nagumo model}\label{sec:fitz}

The stochastic FitzHugh-Nagumo model for spike generation in squid axons is based on a two dimensional diffusion process 
with drift and diffusion coefficent parametrised as
\[ b(x)= \begin{bmatrix} \vartheta_1 (- x_1^3+x_1-x_2 + 1/2)\\
\vartheta_2 x_1- x_2 +\vartheta_3 \end{bmatrix} \qquad \si = \begin{bmatrix} \ga_1 & 0 \\ 0 & \ga_2\end{bmatrix}. \]
The first coordinate $X^{(1)}$ represents the axon membrane potential and $X^{(2)}$ is a recovery variable. Parameter estimation for the FitzHugh-Nagumo model is discussed in \cite{JensenDitlevsen} and extensively in the work of \cite{Jensen}.  
In this example we consider three type of proposals: the modified diffusion bridge (which is of Delyon-Hu type with $\la=0$), the modified diffusion bridge with random-walk type updates on the innovations and guided-proposals with random-walk type updates on the innovations. In both cases we took $\rho=0.5$ in equation \eqref{eq:rw-innovation}. 
We used time-change guided proposals as in \eqref{U} with $\tilde\sigma$ constant, $\tilde{B}\equiv 0$ and $\tilde\beta$ as in equation \eqref{eq:propbeta-simple}. This is a simple default choice.   We discretise \eqref{Usde} as follows: suppose the current iterate is $U_s$. We have
\begin{equation}\label{eq:U-dR}
\dd U_{s} =    \frac{\dd s}{T} (\tilde b(\tau(s))-b(\tau(s), \Gamma(s, U_s))) + \dd R_s
\end{equation}
with $\dd R_s= (T-s)^{-1}\Big(\I - 2a  J(s)\Big)  U_s \dd s - \sqrt{\frac{2}{T}} \frac1{\sqrt{T - s}} \sigma \dd W_s $ (where we have used  the relation $\dot{v}(s) = \tilde\beta(s)$).  
Define 
\[	\bar{R}(h)=(T-s)^{-1}\Big(\I - 2a  J(s)\Big)  U_s- \sqrt{\frac{2}{T}} \frac1{\sqrt{T - s}} \sigma  \frac{W_{s+h}-W_s}{h}. \]
To obtain an approximation $u(s+h)$ for  $U_{s+h}$ we  discretise the ordinary differential equation
\[ \dd u(s) =    \left(\frac{1}{T} \tilde b(\tau(s))-\frac{1}{T} b(\tau(s), \Gamma(s, u(s))) + \bar{R}(h) \right) \dd s, \qquad u(s) = U_s \]
 using the Runge-Kutta-4 method with step size $h$. 
We propose this discretisation scheme since by corollary \ref{cor:euler-good}, 
$\expec{R_{s+h} - R_s \mid R_s} = h \expec{\bar{R}(h) \mid R_s}$. 

We simulated the process with parameters $\vartheta_1 = 1.4$, $\vartheta_2 = 1.5$, $\vartheta_3 = 10$, $\gamma_1= 0.25, \gamma_2 = 0.2$ on the time interval from $0$ to $T=300$ using the Euler scheme with discretisation step $0.0004$ , starting in $[0\; 1]'$, retaining $400$ equidistant observations and the starting point. With these parameters this process presents a challenging estimation problem due to the strong nonlinear dynamics in the drift. 

We chose independent centred Gaussian priors with variance $50$ for the parameters $\vartheta_1, \vartheta_2, \vartheta_3$ and a product  $\operatorname{InvGamma}(0.002,0.002)$ prior on $(\ga^2_1, \ga_2^2)$.

  We used Metropolis-Hastings steps for updating $\ga_i$  by setting $\log \ga_i^\circ=\log \ga_i + 0.02 Z_i$ ($i=1,2$), where $Z_i\sim N(0,1)$. For $j=1,2, 3$ we took $\vartheta_j^\circ=\vartheta_j + \nu_j Y_j$, with $Y_j$ independent Uniform random variables on $[-1,1]$, $\nu_1=\nu_2=0.03$ and $\nu_3=0.15$. 

We estimated the joint posterior of unobserved path and parameters $\vartheta$, $\gamma$ using algorithm \ref{alg1}.
  
We ran the algorithm for the three different proposals with  $m = 10$ and $m = 25$ and $m=100$. Each simulation was stopped after $1$ hour. The simulations were done on a computer equipped with 4 core Xeon CPU clocked at 3.40GHz with 30 GiB memory.
In figure \ref{fig:fhn-trace-theta3E} trace-plots with respect to both computing time and iterate number for $\vartheta_3$ are shown for the three  samplers when $m\in \{10, 25, 100\}$. While iterates for the guided-proposals are more costly, the algorithm with these proposals does reach the stationary region way faster than the two variants of the modified diffusion bridge, especially when $m=100$. However, solely examining trace-plots for the parameters can be misleading as illustrated by figure  \ref{fig:fhn-trace-bridges-accC}. Here, we plotted the average acceptance probability for bridge proposals (on a log10-scale) for the segments in between the $99$-th and $200$-th observations (the picture is representative for all segments). At certain segments the acceptance probabilities differ by several magnitudes.  These segments  correspond precisely to observations  during an excursion from the meta-stable region. In these excursions the diffusion path follows closely the strong nonlinear drift dynamics, unlike in the meta-stable region. Small acceptance probabilities  manifest themselves in slow convergence of the chain.

In addition we ran the algorithm for a longer time ($16$ hours) with $m=200$. In this case, we simply used the Euler-approximation for the time-changed guided proposals. Actually, the  Runge-Kutta-4 method is due to the stiffness of the SDE and only necessary in case of few imputed points.    Trace-plots for $\vartheta_3$ and $\ga_1$ are shown in figure \ref{fig:fhn-trace-theta34Long}. The trace-plot of $\ga_1$ against iteration number clearly shows that the guided proposals chain mixes better.  The posterior means obtained by these methods were then considered to be the ``true'' posterior mean. These were used in computing the error values in table \ref{table:fh}.

\begin{table}
{\small
\begin{tabular}{l || c | c | c | c | c | c| c | c}\toprule
\cmidrule{1-9}
 m =  10  & RRSE &$ \vartheta_1 $ &  $ \vartheta_2 $ &  $ \vartheta_3 $ &  $ \gamma_1 $ &  $ \gamma_2 $ &  mESS & K\\
\cmidrule{1-9}
mdb-rw  &  0.3  & 0.96  & 0.95  & 0.75  & 0.94  & 0.94  &   598.4  &88485 \\ 
gp-tc  &  0.29  & 1.03  & 1.04  & 0.76  & 1.02  & 0.99  &   333.67  &27366 \\ 
mdb-ind  &  0.49  & 0.9  & 0.88  & 0.6  & 1.08  & 1.05  &   761.25  &88685 \\ 
\cmidrule{1-9}
 m =  25  & RRSE &$ \vartheta_1 $ &  $ \vartheta_2 $ &  $ \vartheta_3 $ &  $ \gamma_1 $ &  $ \gamma_2 $ &  mESS & K\\
\cmidrule{1-9}
mdb-rw  &  0.18  & 0.95  & 0.94  & 0.85  & 1.06  & 0.96  &   439.12  &47576 \\ 
gp-tc  &  0.1  & 1  & 1  & 0.91  & 1.05  & 0.99  &   177.78  &12359 \\ 
mdb-ind  &  0.42  & 0.91  & 0.89  & 0.65  & 1.16  & 1.09  &   505.77  &47006 \\ 
\cmidrule{1-9}
 m =  100  & RRSE &$ \vartheta_1 $ &  $ \vartheta_2 $ &  $ \vartheta_3 $ &  $ \gamma_1 $ &  $ \gamma_2 $ &  mESS & K\\
\cmidrule{1-9}
mdb-rw  &  0.2  & 0.94  & 0.93  & 0.84  & 1.09  & 0.99  &   216.48  &14259 \\ 
gp-tc  &  0.01  & 0.99  & 0.98  & 1.01  & 1.08  & 1.01  &   85.47  &3365 \\ 
mdb-ind  &  0.39  & 0.89  & 0.87  & 0.68  & 1.19  & 1.1  &   246.4  &14441 \\ 
\cmidrule{1-9}
\end{tabular}
}
\caption{Estimation results FitzHugh-Nagumo model based on all iterates (no burn-in was considered). For each parameter we report its relative error with respect to the posterior mean obtained from the $16$-hour run with $m=200$ imputed points using the time-changed guided proposals.  Furthermore, we report the Root Relative Squared Error (RRSE) and the multivariate Effective Sample Size (mESS) from \cite{vats}. $K$ is the total number of iterations executed within one hour.  }
\label{table:fh}
\end{table}

\begin{figure}
\centering
\includegraphics[width=\textwidth]{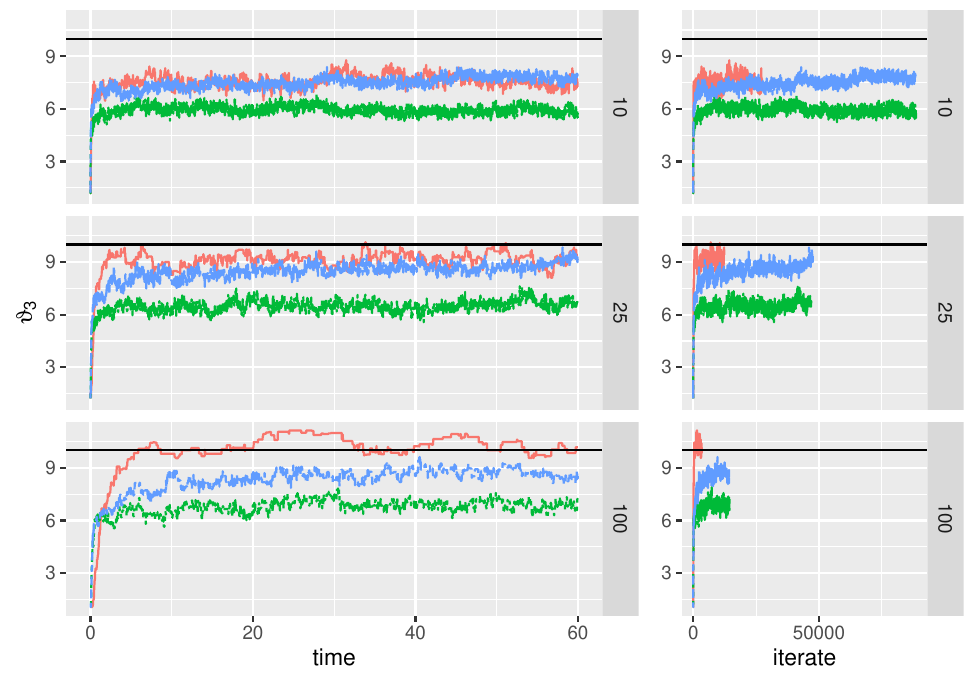}\\
\includegraphics[width=6cm]{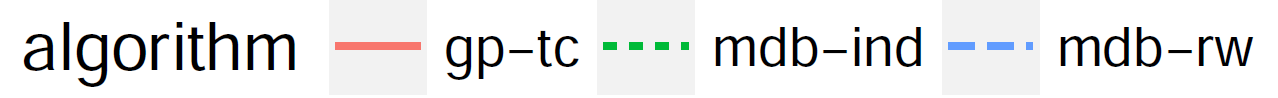}
\caption{Trace-plots for $\vartheta_3$.  Left: with respect to  computing time in minutes. Right: with respect to iterate number. The three different panels correspond to $m=10$, $m=25$ and $m=100$. ``gp-tc'' for guided proposals time-changed; ``mdb-rw'' for random-walk type modified diffusion bridges; ``mdb-rw'' for modified diffusion bridges.}
\label{fig:fhn-trace-theta3E}
\end{figure}

\begin{figure}
\centering
\includegraphics[width=\textwidth]{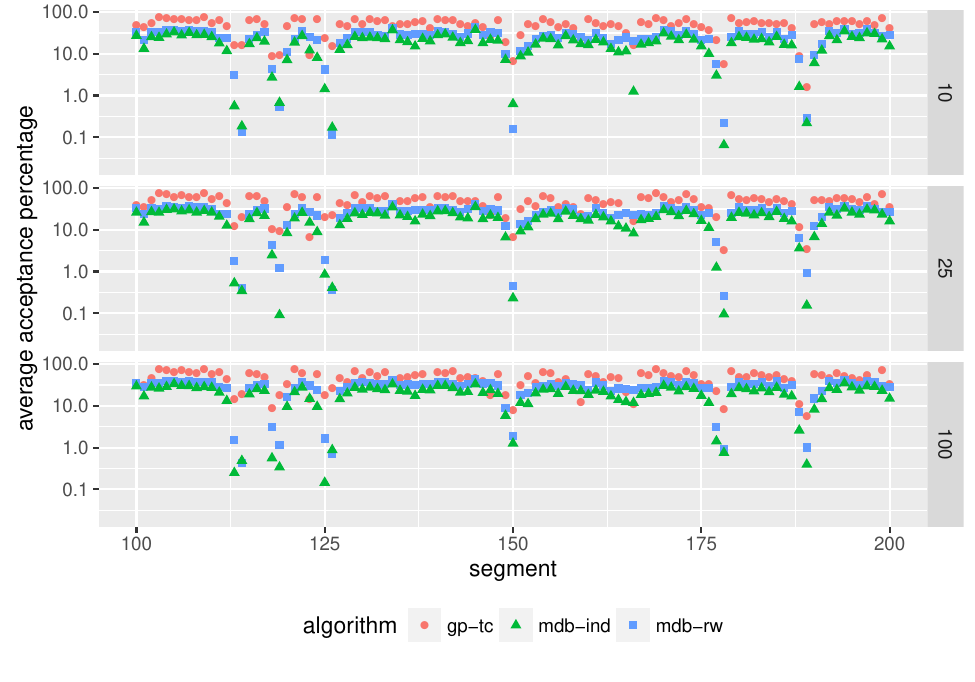}
\caption{Average acceptance percentages for proposed bridges on segments $100$ up to $200$. The three different panels correspond to $m=10$, $m=25$ and $m=100$. }
\label{fig:fhn-trace-bridges-accC}
\end{figure}

\begin{figure}
\centering
\includegraphics[width=\textwidth]{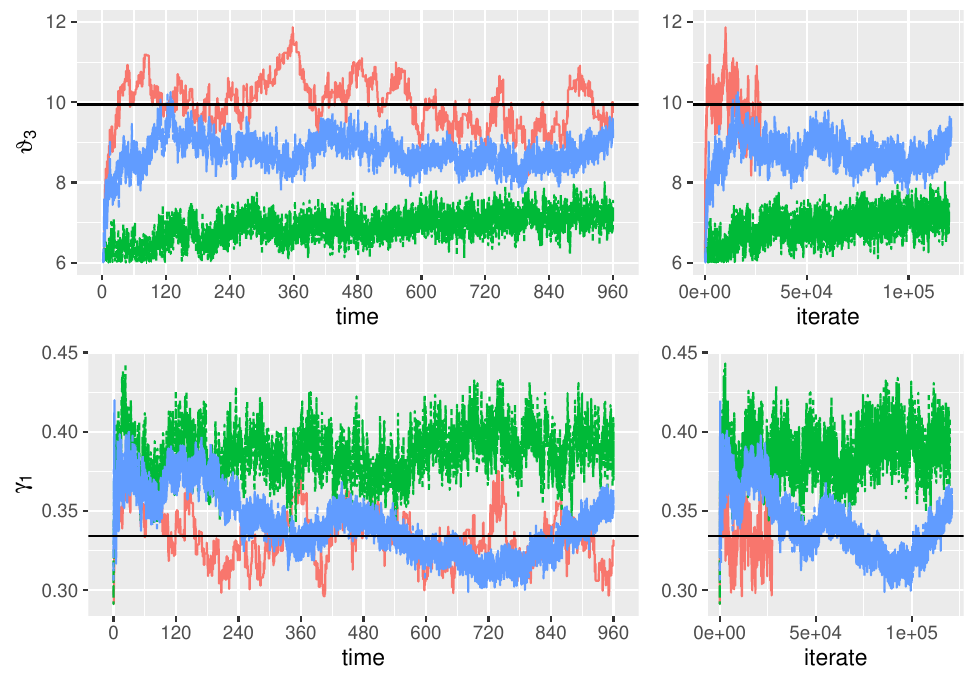}\\
\includegraphics[width=6cm]{legend.png}
\caption{Trace-plots for $\vartheta_3$ and $\ga_1$.  Left: with respect to  computing time in minutes. Right: with respect to iterate number. In all cases  $m=200$. ``gp-tc'' is guided proposals time-changed; ``mdb-rw'' is random-walk type modified diffusion bridge; ``mdb-rw'' is modified diffusion bridge. The initial $735$ iterates of $\vartheta_3$ are smaller than $6$ and not shown. }
\label{fig:fhn-trace-theta34Long}
\end{figure}

%%%%%%%%%%%%%%

%%%%%%%%%%

%%%%%%%%%%%%%%%

\appendix

\section{Proof of Lemma \ref{lem:tc}} 

For ease of notation we will write $\tau$ and $\dot \tau$ instead of $\tau(s)$ and $\dot \tau(s)$. 
If $X$ satisfies the SDE
\[ \dd X_s = b(s,X_s)\dd s + \sigma(s, X_s)\dd \tilde W_s\]
and we are given a  smooth function $\tau = \tau(s)$,  $\tau\colon[0,T) \to \RR_+$ with positive derivative $\dot \tau$, then
\begin{equation}\label{timechange} \dd X_\tau = \dot\tau b(\tau, X_\tau) \dd s + \sqrt{\dot\tau} \sigma(\tau, X_\tau) \dd  W_s,
\end{equation}
where $W$ is a different Brownian motion on the same probability space as $\tilde W$.

Applying this to $X^\circ$ (defined in equation \eqref{xcirc}) gives
\[
\dd X^{\circ}_{\tau} =2(1 - s/T)[b(\tau, X^\circ_\tau) + a(\tau, X^\circ_\tau)\tilde r(\tau, X^\circ_\tau)]\dd s + \sqrt{2(1 - s/T)} \sigma(\tau, X^\circ_\tau) \dd W_s
\]
%one has with $f(s,x) = -x/(T-s)$, $\dot f(s,x) = -(T-s)^2$ $f'(s,x) = -1/(T-s)$, $f''(s,x) = 0$,
by It\=o's formula
\begin{align*} \dd U_s &= \dd\left(\frac{v(\tau)-X^\circ_{\tau}}{T-s}\right) \\ &
%(d)/(ds)((f(s (2-s/T)))/(T-s)) = (2 (s-T)^2 f'(s (2-s/T))+T f(s (2-s/T)))/(T (s-T)^2)
= \frac2T \dot v(\tau)\dd s +\frac{v(\tau)}{(T-s)^2} \dd s
  - \frac{X^\circ_{\tau}}{(T-s)^2} \dd s\\ &\qquad \qquad - \frac2T [b(\tau, X^\circ_\tau) + a(\tau, X^\circ_\tau)\tilde r(\tau, X^\circ_\tau)] \dd s - \frac{\sqrt{2/T}}{\sqrt{T - s}} \sigma(\tau, X^\circ_\tau) \dd W_s
\\ &= \frac2T \dot v(\tau)\dd s + \frac{U_s}{T-s} \dd s
- \frac2T [b(\tau, X^\circ_\tau) + a(\tau, X^\circ_\tau)\tilde r(\tau, X^\circ_\tau)] \dd s \\& \qquad \qquad - \frac{\sqrt{2/T}}{\sqrt{T - s}} \sigma(\tau, X^\circ_\tau) \dd W_s.
\end{align*}
By equation \eqref{eq:rel-r-H},
\[
\tilde r(\tau, X^\circ_\tau) = \tilde H(\tau)(v(\tau) - X^\circ_\tau)
= \tilde H(\tau)(T-s)U_s = J(s) \frac{T}{T-s} U_s.
\]
The result now follows from substituting this expression and using the relation $X^\circ_{\tau(s)} = \Gamma(s, U_s)$ (see \eqref{eq:Xcirc-back}).
 
The statement on $J$ is a consequence of $\tilde{a}(s) J(s)=\tilde{H}(\tau(s))(T-\tau(s))$ together with 
$\lim_{t\to T}\tilde H(t)(T-t) = \tilde a^{-1}$ (see \cite[Lemma 8]{Schauer}). 

The expression for the integral follows upon the substitution $s:= \tau(s)$ and using relation \eqref{eq:rel-r-H}. 

\section{Proof of Theorem \ref{thm:tcou}}\label{app:proofs}

\begin{proof}
By straightforward calculus, the process $U$ satisfies a SDE with drift coefficient 
\[
m \dot{v}(\tau) \dot\tau + \dot{m}m^{-1} U_s -m \dot\tau \left[b(\tau,\Gamma)+a(\tau,\Gamma) \tilde{H}(\tau) m^{-1} U_s\right]
\]
and diffusion coefficient as given in the theorem. We can rewrite the drift coefficient using specific properties of $m$. 

For the first term in the drift, note that
\[
\tilde{b}(\tau, \Gamma)  = \tilde{B}(\tau) \left(v(\tau)-m^{-1} U_s\right) + \tilde\beta(\tau)  = \dot{v}(\tau) - \tilde{B}(\tau) m^{-1} U_s,
\]
where we have used  the relation $\dot{v}(s)=\tilde{B}(s) v(s) + \tilde\beta(s)$ at the second equality.  Multiplying by $m\dot\tau$ we get
\begin{equation}\label{eq:eersteterm}  m \dot\tau \dot{v}(\tau) = m \dot\tau 	\tilde{b}(\tau, \Gamma)  +m\dot\tau \tilde{B}(\tau) m^{-1} U_s. \end{equation}

Next, we rewrite the second term appearing in the drift. Using $\mathrm{d} {A}^{-1}/\mathrm{d}t = - {A}^{-1}\left( \mathrm{d}{A}/\mathrm{d}t\right) {A}^{-1}$  for an invertible matrix $A$, we obtain that
\begin{equation}\label{eq:tweedeterm} \dot{m} m^{-1} =-(T-s)^{-1} \I -\dot\tau \left(\tilde{B}(\tau)-\tilde{a}(\tau) \tilde{H}(\tau)\right). \end{equation}

Substituting \eqref{eq:eersteterm} and \eqref{eq:tweedeterm} into the drift and reordering terms shows that the drift of $U$ equals
\begin{multline*} -(T-s)^{-1} U_s - m\dot\tau \left(b(\tau,\Gamma)-\tilde{b}(\tau,\Gamma)\right)\\ +m\dot\tau \left(\tilde{B}(\tau)-a(\tau,\Gamma) \tilde{H}(\tau)\right) m^{-1}U_s - \dot\tau \left(\tilde{B}(\tau)-\tilde{a}(\tau)\tilde{H}(\tau)\right) U_s. \end{multline*}
This can be simplified to the form given in the theorem by using that
\[  \tilde{B}(\tau)-\tilde{a}(\tau) \tilde{H}(\tau)= m\left(\tilde{B}(\tau)-\tilde{a}(\tau) \tilde{H}(\tau)\right) m^{-1} \]
which follows from the defining relation of $F^\star$. 
\end{proof}

\bibliographystyle{harry}

\bibliography{lit}

\end{document}